\definecolor{shadecolor}{gray}{0.9}
\newtheorem{theorem}{Theorem}[section]
\newtheorem{lemma}[theorem]{Lemma}
\newtheorem{proposition}[theorem]{Proposition}
\theoremstyle{definition}
\newtheorem{remark}[theorem]{Remark}
\newtheorem{assumption}[theorem]{Assumption}
\newcommand{\RR}{\mathbbmss{R}} 
\newcommand{\EE}[1]{\mathbbmss{E}\left[#1\right]} 
\newcommand{\PP}[1]{\mathbbmss{P}\left[#1\right]} 
\newcommand{\Normal}[2]{\mathcal{N}\left(#1,#2\right)} 
\newcommand{\pto}{\overset{p}{\rightarrow}} 
\newcommand{\dto}{\leadsto} 
\newcommand{\Lto}[1]{\overset{\mathbbmss{L}^{#1}}{\rightarrow}} 
\newcommand{\influence}[1]{\varphi\left(#1\right)}
\newcommand{\data}{\mathcal{D}}
\newcommand{\indep}{\perp \!\!\!\! \perp}
\newcommand{\onea}[1]{\mathbbmss1\left\{#1\right\}} 
\newcommand{\Ltwo}{\mathbbmss{L}_2} 
\newcommand{\norm}[1]{\left\|#1\right\|_{2}} 
\newcommand{\thetatarget}{\theta^\star}
\newcommand{\pitarget}{\pi^\star}
\newcommand{\mutarget}{\mu^\star}
\newcommand{\clip}[1]{\textnormal{clip}\left( #1 \right)}
\newcommand{\jPtarget}{\mathbbmss{P}^\star}
\newcommand{\finfluence}[1]{\varphi^{F}\left(#1\right)}
\newcommand{\ACC}{\texttt{DR+ACC}}
\newcommand{\oreg}{\hat\theta_{\texttt{OR}}}
\newcommand{\ipw}{\hat\theta_{\texttt{IPW}}}
\newcommand{\dr}{\hat\theta_{\texttt{DR}}}
\newcommand{\acc}{\hat\theta_{\ACC}}
\newcommand{\risk}[1]{\mathcal{R}\left[#1\right]}
\newcommand{\shortfif}{\varphi^F}
\newcommand{\zoreg}{Z_\texttt{OR}}
\newcommand{\zipw}{Z_\texttt{IPW}}
\newcommand{\zcorr}{Z_\textnormal{correction}}
\title{\textbf{Rescuing double robustness: safe estimation under complete misspecification}}
\author[1,2]{Lorenzo Testa}
\author[2,3]{Francesca Chiaromonte}
\author[1,4]{Kathryn Roeder}
\affil[1]{Department of Statistics \& Data Science, Carnegie Mellon University, Pittsburgh PA, US}
\affil[2]{L'EMbeDS, Sant'Anna School of Advanced Studies, Pisa, Italy}
\affil[3]{Department of Statistics, Penn State University, University Park PA, US}
\affil[4]{Department of Computational Biology, Carnegie Mellon University, Pittsburgh PA, US}
\date{\vspace{-0.1em}\texttt{lorenzo@stat.cmu.edu} \quad \texttt{fxc11@psu.edu} \quad \texttt{roeder@andrew.cmu.edu}\\ \vspace{1em}\today}
\begin{document}

\maketitle

\begin{abstract}
    Double robustness is a major selling point of semiparametric and missing data methodology. Its virtues lie in protection against partial nuisance misspecification and asymptotic semiparametric efficiency under correct nuisance specification. However, in many applications, complete nuisance misspecification should be regarded as the norm (or at the very least the expected default), and thus doubly robust estimators may behave fragilely. In fact, it has been amply verified empirically that these estimators can perform poorly when all nuisance functions are misspecified. Here, we first characterize this phenomenon of \textit{double fragility}, and then propose a solution based on \textit{adaptive correction clipping} (\ACC). We argue that our \ACC{} proposal is \textit{safe}, in that it inherits the favorable properties of doubly robust estimators under correct nuisance specification, but its error is guaranteed to be bounded by a convex combination of the individual nuisance model errors, which prevents the instability caused by the compounding product of errors of doubly robust estimators. We also show that our proposal comes with no reduction in semiparametric efficiency compared to doubly robust estimators, and thus \textit{valid} inference based on asymptotic normality can be conducted when nuisances are well-specified. We showcase the efficacy of our \ACC{} estimator both through extensive simulations and by applying it to the analysis of Alzheimer's disease proteomics data.
\end{abstract}

\section{Introduction}
Scientific progress relies on stable and reproducible evidence, which in turn demands solid statistical methodology \citep{yu2020veridical, yu2024veridical}. In recent decades, a paradigm shift has occurred in various scientific fields, spurred by ideas and concepts from causal inference and missing data analysis, moving from a \enquote{bottom-up} approach, where statistical models are posited first and their parameters interpreted post-hoc, to a \enquote{top-down} approach \citep{kennedy2024semiparametric}. This framework begins by defining a specific target of inquiry -- the estimand -- and then systematically lays out the assumptions and methods required to identify and estimate it from data.

Within this top-down paradigm, doubly robust (\texttt{DR}) estimators have emerged as a particularly powerful and celebrated tool \citep{robins1994estimation, bang2005doubly, scharfstein1999adjusting}. Their appeal lies in two key properties. First, they offer protection against partial model misspecification: they yield a consistent estimate of the target parameter if at least one of the two required nuisance models -- typically an outcome regression model and a propensity score model -- is correctly specified. Second, if both nuisance models are correct, \texttt{DR} estimators achieve semiparametric efficiency, meaning they have the smallest possible asymptotic variance. These virtues have made them a cornerstone of semiparametric and missing data methodology.

However, the theoretical protection of double robustness can be misleading in practice. In applied research, it is often more realistic to assume that \textit{all} models are misspecified to some degree, rather than expecting one to be perfectly correct (even asymptotically). In this scenario of complete nuisance misspecification, the guarantees of \texttt{DR} estimators vanish. Worse, they can become fragile. It has been empirically observed that \texttt{DR} estimators can perform poorly in this setting, sometimes exhibiting more bias than simpler estimators that rely on only one of the misspecified nuisance functions \citep{kang2007demystifying, zhang2025doubly}. We term this phenomenon \textit{double fragility}: the very mechanism designed to provide robustness can, under complete misspecification, amplify estimation errors and lead to unstable results.

This paper provides a formal characterization of double fragility, focusing on the dual nature of the correction term in doubly robust estimators. When at least one of the nuisance models is correct, this term is beneficial, guiding the estimator toward the true parameter. Conversely, when both nuisance models are misspecified, we show that their errors can compound within the correction term, actively degrading the estimator performance rather than improving it. This analysis reveals the source of fragility and directly motivates our solution: \textit{adaptive correction clipping} (\ACC). Our proposal is built on the principle of \textit{safety} -- the property that an estimator performs no worse than the simpler estimators from which it is constructed \citep{xu2025unified}. By adaptively clipping the correction term, our \ACC{} method preserves the consistency and efficiency of standard \texttt{DR} estimators in ideal settings, while crucially preventing the correction from amplifying bias when all models are misspecified. We further show that adaptive correction clipping becomes inactive when nuisance models are well-specified, ensuring that our \ACC{} estimator converges to a normal distribution with optimal variance dictated by the semiparametric efficiency bound. Consequently, we demonstrate that \textit{valid} confidence intervals can be constructed using standard asymptotic normality.

To complement these theoretical and methodological results, we provide extensive empirical evidence of our \ACC{} estimator effectiveness. First, we conduct a simulation study that fully replicates the design of \citet{kang2007demystifying}, a well-known benchmark in the missing data literature. This setup is specifically designed to assess estimator performance across several scenarios of nuisance model specification, allowing for a rigorous evaluation of both fragility and safety. Second, we showcase the practical utility of our method by applying it to a substantive scientific problem: estimating the average treatment effect (ATE) of Alzheimer's disease on hundreds of peptide abundances, using data from \citet{merrihew2023peptide}. This application demonstrates the tangible benefits of our safe estimator in a real-world setting where the risk of model misspecification is high.

\subsection{A preview of our results}
Before diving into the details of our approach, we provide a brief sketch of what can go wrong with \texttt{DR}, and some intuition on our proposal. Assume that we observe a sample of $n$ independent and identically distributed random variables $\left\{\data_i =(X_i,R_i,R_iY_i) \right\}_{i=1}^{n}$. Here, $X_i$ is a vector of covariates, $R_i$ is a binary indicator equal to 1 if $Y_i$ is observed and 0 otherwise, and the outcome $Y_i$ is only observed when $R_i=1$. We let $\data = (X,R,RY)$ denote an independent copy of $\data_i = (X_i,R_i,R_iY_i)$. We operate under the standard assumptions of missing at random ($Y\indep R\mid X$) and positivity. Our goal is to estimate the target parameter 
\begin{equation}
    \thetatarget = \EE{Y} = \EE{\mutarget(X)} = \EE{\frac{RY}{\pitarget(X)}} = \EE{\mutarget(X) + \frac{R}{\pitarget(X)}(Y-\mutarget(X))}\,,
\end{equation}
where $\mutarget(x) = \EE{Y\mid X=x, R=1}$ denotes the nuisance regression function and $\pitarget(x) = \PP{R=1\mid X=x}$ denotes the nuisance propensity score. Assuming we have access to pre-trained or externally fitted models for these functions, denoted $\hat\mu$ and $\hat\pi$ respectively, the well-known \texttt{DR} estimator can be defined as
\begin{equation}
    \dr =  \frac{1}{n} \sum_{i=1}^n \left[\hat\mu(X_i) + \frac{R_i}{\hat\pi(X_i)}(Y_i - \hat\mu(X_i)) \right]\,
\end{equation}
The \texttt{DR} estimator is particularly compelling because it can be algebraically decomposed into:
\begin{equation}
    \dr = \underbrace{\frac{1}{n} \sum_{i=1}^n \hat\mu(X_i)}_{\oreg} + \underbrace{\frac{1}{n} \sum_{i=1}^n \frac{R_iY_i}{\hat\pi(X_i)}}_{\ipw} - \underbrace{\frac{1}{n} \sum_{i=1}^n  \frac{R_i\hat\mu(X_i)}{\hat\pi(X_i)}}_{\text{correction}}\,,
\end{equation}
which sheds light on the fact that the \texttt{DR} estimator is a combination of the simpler \textit{outcome regression} (\texttt{OR}) and \textit{inverse probability weighting} (\texttt{IPW}) estimators, respectively defined as
\begin{equation}
    \oreg =  \frac{1}{n} \sum_{i=1}^n \hat\mu(X_i)\,,\quad \ipw =  \frac{1}{n} \sum_{i=1}^n \frac{R_i Y_i}{\hat\pi(X_i)}\,.
\end{equation}
This simple algebraic structure has profound consequences on the property of the doubly robust estimator. Its correction term is engineered to ensure consistency and, ultimately, efficiency. This behavior can be seen in two key scenarios:
\begin{itemize}
    \item Under partial misspecification, the term acts as a safeguard. For instance, if the outcome model $\hat\mu$ is correct but the propensity model $\hat\pi$ is not, the correction term is designed to asymptotically cancel the biased \texttt{IPW} component. As a result, the \texttt{DR} estimator converges to the same correct limit as the \texttt{OR} estimator. A symmetric cancellation occurs if $\hat\pi$ is correct instead.
    \item Under full correctness, when both nuisance models are well-specified, all three components of the estimator -- \texttt{OR}, \texttt{IPW}, and the correction term -- converge to the true parameter. In this ideal case, the correction term is effectively free to cancel either the \texttt{OR} or the \texttt{IPW} component, a flexibility that leads not only to consistency but also to semiparametric efficiency.
\end{itemize}
While this property is widely known as double robustness, we argue it can be better understood as a form of \textit{asymptotic hard thresholding}, where the estimator effectively selects a valid component through the correction term. The first three panels of Figure \ref{fig:intro_plot_n1000} provide an empirical assessment of this principle. There, we show the distribution of the \texttt{OR}, \texttt{IPW} and \texttt{DR} estimators across 1000 replications (with $n=1000$) under scenarios where either both or at least one of the nuisance models are correct. As the panels illustrate, the \texttt{DR} estimator distribution aligns with that of the correctly specified component, thus ignoring the misspecified one. Although the full details of the simulation design are deferred to Section \ref{sec:sim}, these results offer a practical validation of the hard thresholding mechanism in action.

\begin{figure}[t!]
    \centering
    \includegraphics[width=\linewidth]{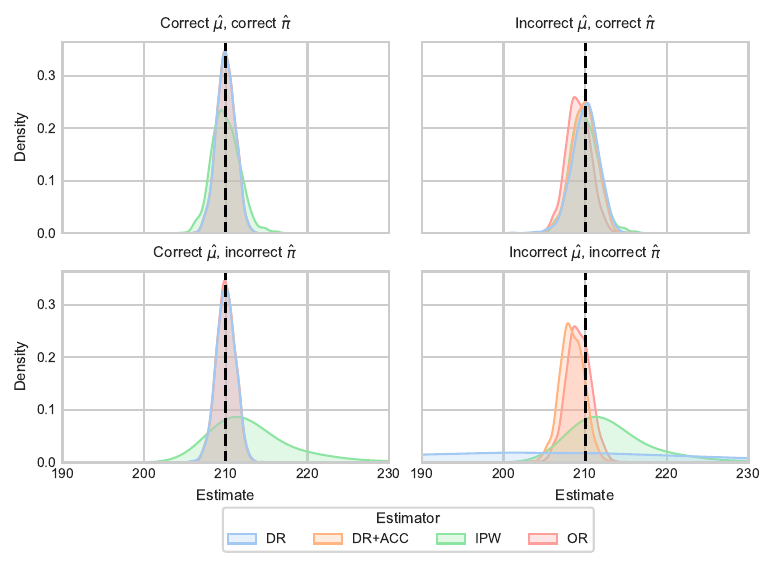}
    \caption{Sampling distributions for the Outcome Regression (\texttt{OR}), Inverse Probability Weighting (\texttt{IPW}), Doubly Robust (\texttt{DR}), and our proposed \ACC{} estimators from 1000 simulations with a sample size of $n=1000$. The true parameter value is 210. The four panels show the estimators' performance under all combinations of correct and incorrect nuisance model specifications. The top row and bottom-left panel demonstrate the \textit{asymptotic hard thresholding} property. When at least one nuisance model is correct, both \texttt{DR} and \ACC{} align with the correct simpler estimator. The bottom-right panel illustrates \textit{double fragility}. When both nuisance models are wrong, the bias of the standard \texttt{DR} estimator is substantially worse than that of either the \texttt{OR} or \texttt{IPW} estimators. In this challenging scenario, our proposed \ACC{} estimator is shown to be \textit{safe}, providing a much more stable and accurate estimate than the standard \texttt{DR} estimator. Full details of the simulation scenario are provided in Section \ref{sec:sim}.}
    \label{fig:intro_plot_n1000}
\end{figure}

The problem arises when both nuisance models are misspecified -- a common, if not ubiquitous, scenario in applied research. In this case, the asymptotic cancellation fails. The correction term, now a function of two incorrect models, can compound their errors in unpredictable ways, potentially introducing substantial bias that pushes the final estimate far from the true value. We call this failure mode \textit{double fragility}. Instead of reducing bias, the correction term can exacerbate it. This phenomenon is illustrated in the final panel of Figure \ref{fig:intro_plot_n1000}. Under complete model misspecification, the standard \texttt{DR} estimator is not only biased but also performs considerably worse than either of its simpler \texttt{OR} and \texttt{IPW} components.

To address this, we propose a simple yet effective solution: \textit{adaptive correction clipping} (\ACC). Our method works by constraining, or \enquote{clipping}, the correction term to ensure the final estimate is always anchored within the range defined by the simpler \texttt{OR} and \texttt{IPW} estimates. This modification enforces a safety property: it inherits the favorable properties of the standard \texttt{DR} estimator when its assumptions hold, but it is guaranteed to perform no worse than its constituent estimators when they fail. As Figure \ref{fig:intro_plot_n1000} demonstrates, the performance of our \ACC{} estimator is nearly identical to that of the standard \texttt{DR} when at least one nuisance model is correct. However, in the double fragility scenario, it substantially outperforms the standard \texttt{DR} estimator, validating its role as a safe and robust alternative.

\subsection{Related work}
Here, we discuss the relationship between our work and closely related scholarship.

\textbf{Semiparametric statistics and missing data.} From a technical standpoint, our work is grounded in the field of semiparametric statistics, which focuses on estimation in the presence of high-dimensional nuisance parameters. Missing data literature provides a rich toolkit for such problems, including methods such as Inverse Probability Weighting (\texttt{IPW}), augmented IPW, and targeted maximum likelihood estimation, many of which lead to doubly robust estimators \citep{horvitz1952generalization, hernan2006estimating, van2006targeted}. The foundational concepts of influence functions, tangent spaces, and semiparametric efficiency bounds -- developed in seminal works by \citet{bickel1993efficient, tsiatis2006semiparametric, van2000asymptotic} -- serve as the theoretical backbone for much of modern statistical learning. It is important to note, however, that most of these classical results are based on asymptotic arguments, which may not always hold in finite samples. Therefore, recent literature has increasingly focused on  finite-sample performance of semiparametric estimators. \citet{mou2022off} analyze two-stage procedures common in causal inference and prove non-asymptotic upper bounds on the mean-squared error, revealing that to achieve optimality in finite samples, the error in estimating the nuisance function should be minimized in a specific weighted $\Ltwo$-norm. Similarly, \citet{wang2024pac} provide a finite-sample analysis of doubly robust estimators using PAC-style guarantees, demonstrating that minimizing the estimation error of the treatment effect in terms of Chi-square distance is crucial for minimizing the final estimator variance. For the widely-used augmented inverse probability weighting (AIPW) estimator, \citet{wang2023non} explore its non-asymptotic properties, showing it can achieve near-oracle performance even when the nuisance models converge at a slow, non-parametric rate. This focus on non-asymptotic guarantees has also extended to specific, challenging scenarios. \citet{ghadiri2023finite} address the historical lack of non-asymptotic accuracy bounds for treatment effect estimation in the finite population setting, while \citet{celentano2023challenges} tackle the high-dimensional $n<p$ \enquote{inconsistency regime}, where they develop a novel procedure to achieve consistency when standard methods fail. Complementing these theoretical advances, empirical studies provide practical insights; for instance, \citet{witter2024benchmarking} used a novel benchmarking framework to show that simpler, doubly robust estimators often outperform more complicated methods in practice, a finding which in turn spurred new theory on the finite-sample variance of these estimators. More recently, \citet{colangelo2026double} develop doubly robust inference for continuous treatments under unconfoundedness, constructing estimators for the average dose-response function that are asymptotically normal at nonparametric rates.

\textbf{Stability, safety, and robust statistics.} The principles of stability and robustness are crucial for developing reliable and reproducible statistical methods. Classical robust statistics, with foundational work by \citet{hampel1974influence, huber1996robust, he1992reweighted, hampel2011robust}, introduced concepts like the influence function to create estimators that are insensitive to outliers or deviations from assumed data distributions. More recently, this idea has been broadened to the concept of stability, which encompasses the entire data science life cycle, emphasizing the importance of methods being resilient to perturbations in data, models, and analytical choices \citep{yu2020veridical, rewolinski2025pcs, agarwal2025pcs, yu2013stability, yu2024veridical}. The development of tools like the \textit{s-value} for evaluating stability against distributional shifts further highlights the field's focus on this property \citep{gupta2023s}. Our work connects directly to these developments by identifying a specific instability in doubly robust estimators, which we term double fragility. The safety property we propose is a specific form of stability tailored to this problem, guaranteeing that the estimator is robust against the complete misspecification of its underlying nuisance models \citep{deng2024optimal, xu2025unified}. We also want to note that the finite-sample instability of doubly robust estimators is a well-recognized challenge. A common \textit{ad hoc} adjustment is propensity score \textit{trimming}, where estimated probabilities of treatment are bounded away from 0 and 1 to prevent the inverse weights from becoming excessively large \citep{sturmer2021propensity, ma2023doubly}. Another approach is self-normalization, which is used in the Hájek estimator \citep{basu1971essay}. \citet{cai2024c} recently proposed the C-Learner, which reframes the construction of a debiased estimator as a constrained optimization problem. Instead of first training a nuisance model to optimize prediction accuracy and then applying a \textit{post hoc} correction, the C-Learner directly trains the outcome model to minimize prediction error subject to the constraint that the first-order error term is zero. 

\textbf{Semi-supervised learning and prediction-powered inference.} Our setup, which we will introduce shortly, mimics the one usually studied in semi-supervised learning. For instance, \citet{zhang2019semi, zhang2022high} have focused on semi-supervised mean estimation, including extensions to high-dimensional settings with bias-corrected inference. The problem of semi-supervised linear regression has also been explored in depth, leading to the development of asymptotically normal estimators with improved efficiency and minimax optimal estimators in high dimensions, with subsequent extensions to generalized linear models \citep{chakrabortty2018efficient, azriel2022semi}. The scope of semi-supervised has broadened to encompass more general inferential tasks like M-estimation and U-statistics \citep{chakrabortty2016robust, testa2025semiparametric, kim2024semi}. A closely related paradigm is prediction-powered inference (PPI), where an analyst leverages a pre-trained, black-box machine learning model in addition to labeled and unlabeled data \citep{angelopoulos2023prediction, angelopoulos2023ppi++, zrnic2024cross}.

\subsection{Roadmap}
The remainder of this paper is organized as follows. First, in Section \ref{sec:dr} we introduce our setup and notation, and then we analyze the behavior of standard doubly robust estimators, formally characterizing the \textit{double fragility} phenomenon that arises under complete nuisance model misspecification. Then, in Section \ref{sec:acc} we introduce our proposed solution, \ACC, an estimator that uses \textit{adaptive correction clipping} as a safeguard against this fragility. Here, we prove consistency, semiparametric efficiency, and safety of our estimator. In Section \ref{sec:sim} we provide empirical validation through an extensive simulation study, and in Section \ref{sec:app} we demonstrate the practical utility of our method with an application to the analysis of peptide abundance data in an Alzheimer's study. Section \ref{sec:end} contains some concluding remarks. Additional theoretical results and simulation details are provided in the Supplementary Material. All code for reproducing our analysis is available at \url{https://github.com/testalorenzo/DoubleFragility}.

\section{Problem setup and review of double robustness}
\label{sec:dr}

\subsection{Setup and notation}
Following traditional nomenclature from semiparametric statistics literature, we cast our problem in a missing data framework. We denote the \textit{observed data} as $\left\{ \data_i =(X_i,R_i,R_iY_i) \right\}_{i=1}^{n}$, where $R_i$ is a binary indicator that indicates whether observation $\data_i$ is labeled ($R_i=1$), or unlabeled ($R_i=0$). We let $\data = (X,R,RY)$ denote an independent copy of $\data_i = (X_i,R_i,R_iY_i)$. We denote the data-generating distribution as $\jPtarget\in\mathcal{P}$, where $\mathcal{P}$ is the set of distributions induced by a nonparametric model, and thus we write $\data\sim\jPtarget$. For theoretical convenience, we also define the \textit{full data} $\data^F = (X,Y)$, that is, the data that we would observe if there were no missingness mechanisms in place. 

The potential discrepancy between the labeled and unlabeled datasets due to \textit{distribution shift} naturally leads us to adopt a \textit{missing at random} (MAR) labeling mechanism. We formally state this assumption below, along with a weak overlap assumption -- 
as requirements for identifiability of the target parameter.

\begin{assumption}[Identifiability]
\label{ass:MAR}
    Let the following assumptions hold:
    \begin{enumerate}[label=\textbf{\alph*.}]
    \item \textbf{Missing at random.} $R \indep Y \mid X$. 
    \item \textbf{Weak overlap.} $\pi(x) = \PP{R=1\mid X=x} \in (0,1)$ for all $x\in\RR^p$ almost surely.
\end{enumerate}
    \end{assumption}

Throughout this paper, we focus on the estimation of a target quantity $\thetatarget\in\RR$ that is defined as the functional solving $\thetatarget = \theta(\jPtarget)$, with $\theta\,:\,\mathcal{P} \to \RR$.
We restrict our attention to target parameters that can be estimated using \textit{regular} and \textit{asymptotically linear observed-data} estimators, that is, targets that admit the expansion
\begin{equation}
\label{eq:ral}
    \sqrt{n} \left(\hat\theta - \thetatarget \right) = n^{-1/2} \sum_{i=1}^{n} \influence{\data_i;\thetatarget} + o_\mathbbmss{P}(1)\,,
\end{equation}
for some regular and asymptotically linear \textit{observed-data} estimator $\hat\theta$, and some function $\influence{\data_i;\thetatarget}$, referred to as \textit{observed-data influence function}, evaluating the contribution of the \textit{observed-data} sample $\data_i$ to the overall estimator $\hat\theta$ \citep{hampel1974influence}.
This framework is very general and encompasses M-estimation, Z-estimation, and many estimands in causal inference \citep{kennedy2024semiparametric, van2000asymptotic}.

In a nonparametric setting, semiparametric theory shows that for any given target parameter, there is a unique \textit{full-data influence function}. This theoretical building block can be projected onto the observed data to derive the \textit{efficient observed-data influence function}, which forms the basis of many estimators encountered in practice. In particular, given a full-data influence function $\finfluence{\data^F;\thetatarget}$, one can recover the efficient observed-data influence function using the following Lemma.
\begin{lemma}[Observed-data influence function]
\label{prop:obs-if}
Let $\thetatarget$ be a target parameter that admits a regular and asymptotically linear (RAL) expansion as in Eq.~\ref{eq:ral},
and let  $\finfluence{\data^F;\thetatarget}$ be the full-data influence function associated to it. Assume the identifiability conditions described in Assumption~\ref{ass:MAR}. Then, the observed-data influence function for the target $\thetatarget$ is given by:
\begin{equation}
    \label{eq:if}
        \influence{\data;\thetatarget} = \mutarget(X) + \frac{R}{\pitarget(X)} \left(\finfluence{\data;\thetatarget} - \mutarget(X) \right) \,,
\end{equation}
where $\mutarget(x) = \EE{\finfluence{\data^F;\thetatarget}\mid X=x, R=1}$ is the nuisance regression function, and $\pitarget : \RR^p \to (0,1)$ denotes the nuisance propensity score, defined as $\pitarget(x) = \PP{R = 1 \mid X = x}$.
\end{lemma}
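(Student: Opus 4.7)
The plan is to verify that the proposed candidate satisfies the two defining properties of an efficient influence function in the nonparametric observed-data model: mean-zero under $\jPtarget$ and the pathwise derivative identity along every regular parametric submodel passing through $\jPtarget$.

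First, I would check the mean-zero property by exploiting MAR and weak overlap from Assumption~\ref{ass:MAR}. Since $R \indep Y \mid X$, iterated expectations give $\EE{\frac{R}{\pitarget(X)}\finfluence{\data;\thetatarget}} = \EE{\mutarget(X)}$ and $\EE{\frac{R}{\pitarget(X)}\mutarget(X)} = \EE{\mutarget(X)}$; these two terms cancel in the definition of $\influence{\data;\thetatarget}$, leaving $\EE{\mutarget(X)} = \EE{\finfluence{\data^F;\thetatarget}} = 0$ by the defining mean-zero property of the full-data influence function. Weak overlap ensures all ratios are almost surely finite, making each manipulation rigorous.

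Second, I would establish the pathwise derivative property. For any regular one-dimensional submodel $\{\jPtarget_t\}$ with observed-data score $s(\data)$, the MAR factorization $p(X)\pitarget(X)^R(1-\pitarget(X))^{1-R}p(Y\mid X)^R$ induces an orthogonal decomposition $s = s_X + s_R + s_{Y\mid X}$ of the tangent space into three pieces carrying information about the marginal of $X$, the missingness mechanism, and the conditional of $Y$ given $X$. Since $\thetatarget$ depends on $\jPtarget_t$ only through its full-data margin, the full-data RAL expansion yields $\frac{d}{dt}\theta(\jPtarget_t)\big|_{t=0} = \EE{\finfluence{\data^F;\thetatarget}\cdot(s_X + s_{Y\mid X})}$. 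I would then rewrite this as $\EE{\influence{\data;\thetatarget}\cdot s}$ by using MAR to push the $R/\pitarget(X)$ weight through conditional expectations and by verifying that the augmentation term $\mutarget(X)(1 - R/\pitarget(X))$ annihilates $s_R$, as it must since $s_R$ carries no information about $\thetatarget$.

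Having established both properties, I would conclude by uniqueness: in the nonparametric observed-data model the efficient influence function is the unique mean-zero, square-integrable function of $\data$ satisfying the pathwise derivative identity along every such submodel, so the candidate coincides with the observed-data IF entering the RAL expansion of Eq.~\ref{eq:ral}, proving Eq.~\ref{eq:if}. The main obstacle will be the orthogonality calculation against $s_R$, which requires combining the definition of $\mutarget$ with MAR in a careful manner; this is algebraically delicate but parallels the classical derivation of the AIPW efficient influence function (see, e.g., Robins, Rotnitzky, and Zhao, 1994; Tsiatis, 2006).
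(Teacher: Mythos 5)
Your proposal is correct in substance, but it takes a genuinely different route from the paper. The paper's proof is constructive and projection-based: it invokes Theorem 7.2 of Tsiatis (2006) to characterize the entire class of observed-data influence functions as $\frac{R}{\pitarget(X)}\finfluence{\data^F;\thetatarget}$ plus an element of the augmentation space $\Lambda_2 = \{L_2(\data) : \EE{L_2(\data)\mid \data^F}=0\}$, and then obtains the stated form by projecting the IPW term onto $\Lambda_2$ (Theorems 10.1--10.2), which yields the augmentation $-\bigl(\frac{R-\pitarget(X)}{\pitarget(X)}\bigr)\mutarget(X)$ and hence Eq.~\ref{eq:if} after rearrangement. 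You instead take the candidate as given and verify the two defining properties of an efficient influence function directly --- mean zero under $\jPtarget$, and the pathwise-derivative identity via the MAR factorization of the likelihood into the three orthogonal score components $s_X$, $s_R$, $s_{Y\mid X}$ --- closing with uniqueness in the nonparametric observed-data model. Both arguments are standard and both work. The paper's approach buys more: it exhibits the full class of observed-data influence functions and explains structurally why the augmentation term has the form it does, at the cost of leaning on the cited projection theorems. Your guess-and-verify argument is more self-contained and elementary, but it only certifies a candidate you must already have in hand, and its burden shifts to the tangent-space bookkeeping --- in particular the orthogonality of $\mutarget(X)\bigl(1-\frac{R}{\pitarget(X)}\bigr)$ to $s_R$ and the identification of the full-data pathwise derivative with $\EE{\finfluence{\data^F;\thetatarget}\,(s_X+s_{Y\mid X})}$, both of which you correctly flag and which do go through under Assumption~\ref{ass:MAR} (weak overlap being needed so that the inverse weights are well defined and the observed-data submodels span the full-data tangent space). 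One small caution: your uniqueness step relies on the model being genuinely nonparametric so that the observed-data tangent space is all of mean-zero $L_2$; the paper's framing ($\mathcal{P}$ nonparametric) supplies this, but it is worth stating explicitly since otherwise only the efficient member of the class is pinned down.
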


In observational studies, the nuisance regression function $\mutarget$ and the propensity score $\pitarget$ are unknown and must be estimated from the data. For simplicity, we will proceed by assuming access to pre-trained models, denoted $\hat\mu$ and $\hat\pi$. However, our results can be readily extended to the practical setting where these nuisance functions are estimated from the data using techniques like \textit{sample splitting} or \textit{cross-fitting}. Sample splitting works as follows. We randomly split the observations $\{\data_1,\dots,\data_{n}\}$ into 2 disjoint folds. We form $\hat{\mathbbmss{P}}$ with the first fold, and $\mathbbmss{P}_{n}$ with the second fold. Then, we learn $\hat{\mu}$ and $\hat{\pi}$ on $\hat{\mathbbmss{P}}$, and we compute the estimator $\dr$ by solving for $\thetatarget$ the estimating equation
\begin{equation}
\sum_{i\in\mathbbmss{P}_{n}}\influence{\data_i;\thetatarget;\hat{\mu};\hat{\pi}} = 0\,.
\end{equation} 
This separation of training and estimation prevents overfitting and is crucial for valid inference. The resulting estimator takes the form
\begin{equation}
\label{eq:estimator}
    \dr = \frac{1}{\left|\mathbbmss{P}_{n}\right|} \sum_{i\in\mathbbmss{P}_{n}} \left[ \hat\mu(X_i) + \frac{R_i}{\hat\pi(X_i)} \left(\finfluence{\data_i;\thetatarget} - \hat\mu(X_i) \right) \right]\,,
\end{equation}
where $\left|\mathbbmss{P}_{n}\right|$ denotes the cardinality of $\mathbbmss{P}_{n}$. By assuming access to external nuisance models, the averaging distribution $\mathbbmss{P}_{n}$ contains the entire sample at hand, so that $\left|\mathbbmss{P}_{n}\right| = n$.

\subsection{What is robust in double robustness?}
The celebrated \textit{double robustness} property of estimators as in Eq.~\ref{eq:estimator} is foundational to modern semiparametric statistics. This Section revisits this concept, first through a novel framing that explains its mechanism and then by analyzing the failure mode that motivates our work.

\begin{figure}
    \centering
    \includegraphics[width=0.5\linewidth]{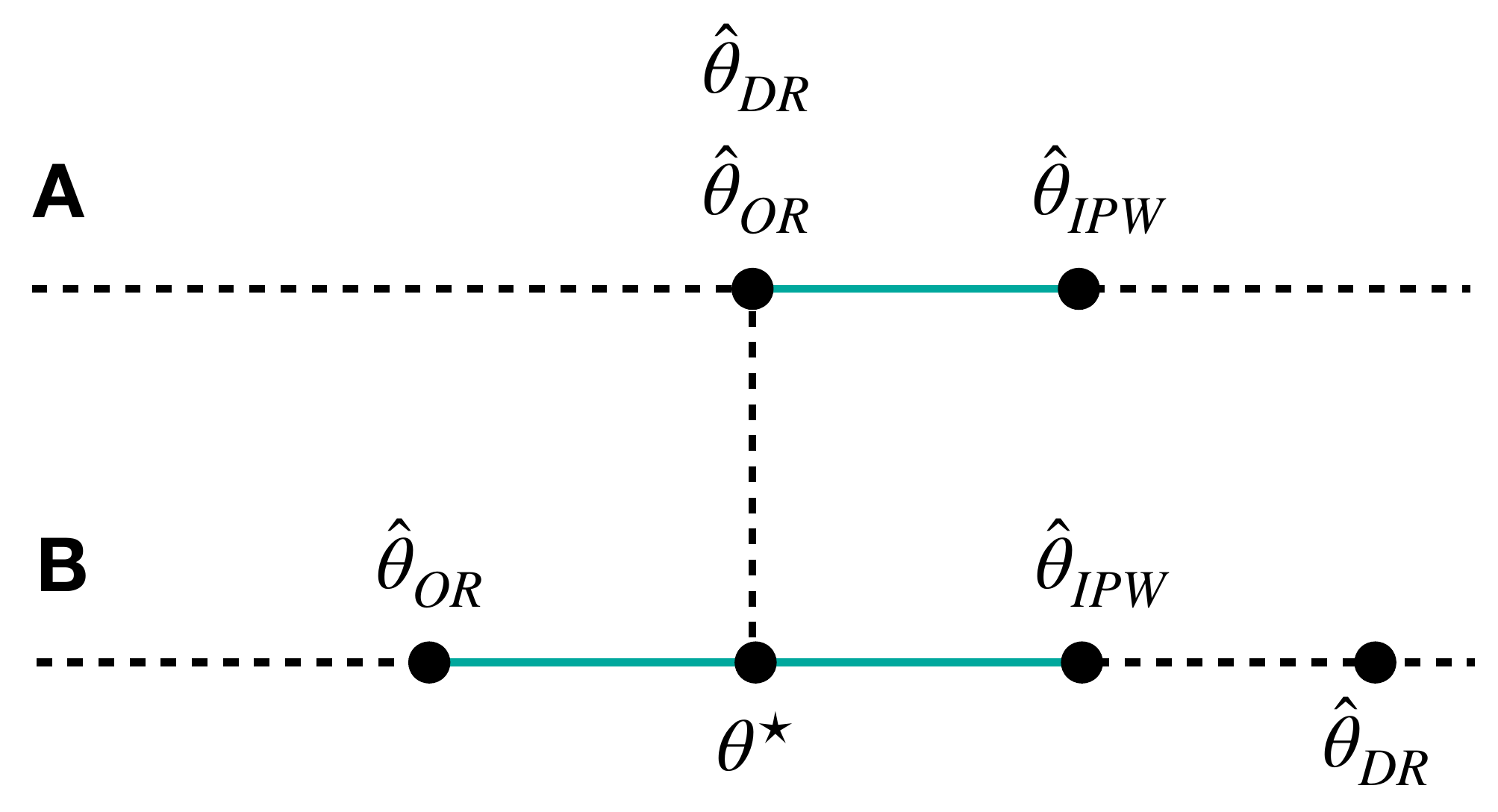}
    \caption{A conceptual illustration of the behavior of the doubly robust estimator under different model specification scenarios. Scenario A depicts the case of partial misspecification, where the \texttt{OR} estimator is consistent for $\thetatarget$ but the \texttt{IPW} estimator is biased. The \texttt{DR} estimator correctly aligns with the consistent \texttt{OR} estimator, demonstrating its asymptotic hard thresholding property. Scenario B depicts the case of complete misspecification, where both the \texttt{OR} and \texttt{IPW} estimators are biased. This illustrates double fragility: the bias of the \texttt{DR} estimator can be substantially worse than that of its constituent estimators, as its correction term can amplify, rather than reduce, the overall error.}
    \label{fig:sketch}
\end{figure}

Classically, double robustness refers to the property that an estimator remains consistent if at least one of its two nuisance functions (the outcome regression $\hat\mu$ or the propensity score $\hat\pi$) is correctly specified. We argue that this behavior can be more mechanistically understood as a form of \textit{asymptotic hard thresholding}. In fact, the \texttt{DR} estimator in Eq.~\ref{eq:estimator} can be written as:
\begin{equation}
    \dr = \oreg + \ipw - \frac{1}{\left|\mathbbmss{P}_{n}\right|} \sum_{i\in\mathbbmss{P}_{n}} \frac{R_i\hat\mu(X_i)}{\hat\pi(X_i)}\,.
\end{equation}
When one nuisance model is correct, the estimator correction term is engineered to asymptotically cancel the bias introduced by the misspecified model, effectively forcing the estimator to rely solely on the correctly specified component. When both models are correct, this flexibility allows the estimator to achieve optimal semiparametric efficiency. While Figure \ref{fig:sketch} depicts this intuition, the following Proposition formalizes it.
\begin{proposition}
\label{prop:corr}
    Assume the identifiability conditions described in Assumption~\ref{ass:MAR}. If $\hat\mu =\mutarget$, then
    \begin{equation}
        \EE{\frac{1}{n} \sum_{i=1}^n  \frac{R_i\hat\mu(X_i)}{\hat\pi(X_i)}} = \EE{\ipw}\,.
    \end{equation}
    Similarly, if $\hat\pi =\pitarget$, then
    \begin{equation}
        \EE{\frac{1}{n} \sum_{i=1}^n  \frac{R_i\hat\mu(X_i)}{\hat\pi(X_i)}} = \EE{\oreg}\,.
    \end{equation}
    In either case, $\EE{\dr} = \thetatarget$.
\end{proposition}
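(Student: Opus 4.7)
The plan is to prove all three claims with a single, short argument based on iterated expectation conditional on $X$, exploiting the MAR assumption to peel the indicator $R$ out of the relevant terms. Since $\hat\mu$ and $\hat\pi$ are assumed to be pre-trained and thus treated as fixed functions for this calculation, the entire argument reduces to algebra under the population distribution of $\data = (X,R,RY)$.

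First, I would handle the correction term $\mathbb{E}\!\left[R\hat\mu(X)/\hat\pi(X)\right]$ in full generality. Conditioning on $X$ and using that $\mathbb{E}[R\mid X]=\pitarget(X)$, I obtain
\begin{equation}
\EE{\frac{R\hat\mu(X)}{\hat\pi(X)}} \;=\; \EE{\frac{\pitarget(X)\,\hat\mu(X)}{\hat\pi(X)}}\,.
\end{equation}
In parallel, for the \texttt{IPW} term I would combine iterated expectation with MAR ($R\indep Y\mid X$), which gives $\mathbb{E}[RY\mid X] = \pitarget(X)\,\mathbb{E}[Y\mid X] = \pitarget(X)\,\mutarget(X)$ (here invoking that under MAR the observed regression $\mutarget(x)=\EE{Y\mid X=x,R=1}$ coincides with $\EE{Y\mid X=x}$), yielding
\begin{equation}
\EE{\ipw} \;=\; \EE{\frac{\pitarget(X)\,\mutarget(X)}{\hat\pi(X)}}\,.
\end{equation}

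The two identities in the Proposition are now immediate: substituting $\hat\mu=\mutarget$ in the first display reproduces the second display, giving claim (a); substituting $\hat\pi=\pitarget$ in the first display collapses it to $\EE{\hat\mu(X)}=\EE{\oreg}$, giving claim (b). For the third claim, I would write $\dr = \oreg + \ipw - \text{correction}$ and take expectations. If $\hat\mu=\mutarget$, then $\EE{\oreg}=\EE{\mutarget(X)}=\EE{Y}=\thetatarget$ (again by MAR), while the correction cancels $\EE{\ipw}$ by claim (a), leaving $\thetatarget$. If $\hat\pi=\pitarget$, then the general \texttt{IPW} identity specializes to $\EE{\ipw}=\EE{\mutarget(X)}=\thetatarget$, and the correction cancels $\EE{\oreg}$ by claim (b), again leaving $\thetatarget$.

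There is no real obstacle here: the only subtlety is the use of MAR to equate $\EE{Y\mid X}$ with $\mutarget(X)$, which is needed both for the \texttt{IPW} computation and for identifying $\EE{\mutarget(X)}$ with $\thetatarget$; positivity from Assumption~\ref{ass:MAR} ensures that the ratios $\hat\mu/\hat\pi$ and $Y/\hat\pi$ pose no integrability issues provided $\hat\pi$ is bounded away from $0$, which I would note as a standing regularity condition on the nuisance estimates.
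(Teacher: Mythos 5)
Your proof is correct and follows essentially the same route as the paper's: condition on $X$, use $\EE{R\mid X}=\pitarget(X)$ and MAR to reduce both the correction term and the \texttt{IPW} term to expectations of the form $\EE{\pitarget(X)\,g(X)/\hat\pi(X)}$, and then substitute the well-specified nuisance. You are somewhat more explicit than the paper about why the cancellations yield $\EE{\dr}=\thetatarget$ and about the integrability caveat, but there is no substantive difference in approach.
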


A more refined analysis of the error of the \texttt{DR} estimator can be carried out using \textit{Von Mises expansion}. In fact, assuming that there exist $\Bar{\mu}$ and $\Bar{\pi}$ such that $\influence{\data;\thetatarget;\hat{\mu};\hat{\pi}}\Lto{2} \influence{\data;\thetatarget;\Bar{\mu};\Bar{\pi}}$, we can write:
\begin{equation}
    \dr - \thetatarget = \underbrace{\left(\mathbbmss{P}_{n} - \jPtarget\right) \left[\influence{\data;\thetatarget;\Bar{\mu};\Bar\pi}\right]}_{\text{CLT term}} + \underbrace{\left(\mathbbmss{P}_{n} - \jPtarget\right) \left[ \influence{\data;\thetatarget;\hat\mu;\hat\pi} - \influence{\data;\thetatarget;\Bar{\mu};\Bar\pi} \right]}_{\text{Empirical process term}} + \underbrace{\eta\left((\hat\mu,\hat\pi),(\mutarget,\pitarget)\right)}_{\text{Remainder term}} \,.
\end{equation}
The first term is the sample average of a fixed function, and so Central Limit Theorem applies. The second term is an empirical process, and under the mild condition on the $\mathbbmss{L}_2$-convergence of influence functions above it can be shown to be $o_\mathbbmss{P}(n^{-1/2})$, e.g.~by Lemma 2 in \citet{kennedy2020sharp}. The third term, known as \textit{remainder} term, can be directly evaluated, and equals the product of the estimation errors of the two nuisance models:
\begin{equation}
        \left|\eta\left((\hat\mu,\hat\pi),(\mutarget,\pitarget)\right)\right| = \left| \EE{(\hat\mu - \mutarget )\left(1 - \frac{\pitarget}{\hat\pi} \right)} \right| \leq  \norm{\hat\mu-\mutarget} \norm{1 - \frac{\pitarget}{\hat\pi}} \,,
\end{equation}
where, defining for a random variable $Z$ the norm operator $\norm{Z} = \sqrt{\EE{Z^2}}$, the last inequality holds by Cauchy-Schwarz. This property, often called \textit{rate} double robustness, \textit{strong} double robustness, or \textit{product bias}, is highly advantageous when one model is easy to estimate well (e.g., at a parametric rate) while the other is not \citep{wager2024causal}. However, this same structure becomes a critical liability under complete model misspecification. When both nuisance functions are incorrect, their errors no longer cancel. Instead, they can compound through this product term, causing the error of the \texttt{DR} estimator to be even larger than that of the simpler \texttt{OR} or \texttt{IPW} estimators. We term this failure mode \textit{double fragility}.

\section{Safe estimation through adaptive correction clipping}
\label{sec:acc}
Having established that the correction term is the source of double fragility, we now introduce a solution that directly targets this mechanism. Our proposal builds on a key insight: any doubly robust estimator can be decomposed into its simpler components plus a correction term. Instead of letting the correction term vary freely, our proposed solution adaptively constrains the correction term to prevent it from destabilizing the final estimate. We define the \textit{adaptive correction clipping} (\ACC) estimator as
\begin{equation}
    \acc = \oreg + \ipw - \clip{\frac{1}{n} \sum_{i=1}^n \frac{R_i \hat\mu(X_i)}{\hat\pi(X_i)}}\,,
\end{equation}
where, given a properly chosen slack term $\delta_n\geq0$,
\begin{equation}
   \clip{\frac{1}{n} \sum_{i=1}^n \frac{R_i \hat\mu(X_i)}{\hat\pi(X_i)}} = \begin{cases}
        \min\{\oreg, \ipw \} - \delta_n & \text{if}\quad \frac{1}{n} \sum_{i=1}^n \frac{R_i \hat\mu(X_i)}{\hat\pi(X_i)} \leq \min\{\oreg, \ipw\} - \delta_n \\
        \max\{\oreg, \ipw\} + \delta_n &\text{if}\quad \frac{1}{n} \sum_{i=1}^n \frac{R_i \hat\mu(X_i)}{\hat\pi(X_i)} \geq \max\{\oreg, \ipw\} + \delta_n \\
        \frac{1}{n} \sum_{i=1}^n \frac{R_i \hat\mu(X_i)}{\hat\pi(X_i)} & \text{otherwise}\,.
    \end{cases}
\end{equation}

\begin{remark}
    A key advantage of our \ACC{} estimator, inherited by the standard doubly robust estimator, is its flexibility; any \enquote{black-box} machine learning model can be used to fit the nuisance models for the outcome regression and the propensity score. This allows for the use of state-of-the-art predictive tools best suited for the data at hand -- including contemporary deep learning models and even large language models for text (or textualized) covariates -- without altering the fundamental statistical properties of the final \ACC{} estimator.
\end{remark}


Now that we have introduced our \ACC{} estimator, we can analyze its statistical guarantees. First, we show that \ACC{} is consistent whenever at least one nuisance function is well-specified. Second, we show that the clipping mechanism becomes asymptotically inactive if the nuisance models converge to their population counterparts at a parametric product rate. This property ensures that \ACC{} is asymptotically equivalent to the standard \texttt{DR} estimator when the latter is optimal, thus fully preserving its \textit{semiparametric efficiency} and enabling valid inference based on asymptotic normality. Third, we show that the projection onto the clipping interval endows the \ACC{} estimator with two additional desirable properties: a \textit{safety} guarantee that provides protection under complete nuisance misspecification and an \textit{adaptive oracle} property on its mean squared error. We now make these claims more formal, starting with consistency.

\begin{theorem}[Consistency]
\label{th:consistency}
Assume the identifiability conditions described in Assumption~\ref{ass:MAR}. Assume also that at least one of the nuisance models, $\hat{\mu}$ or $\hat{\pi}$, is well-specified. Then, the \ACC{} estimator is consistent for $\thetatarget$, that is
\begin{equation}
    \acc \pto \thetatarget\,.
\end{equation}
\end{theorem}

\begin{remark}[Double robustness]
    This consistency result shows that our \ACC{} estimator is doubly robust, inheriting this critical property from the standard \texttt{DR} estimator. By remaining consistent when at least one nuisance model is correctly specified, the \ACC{} estimator offers a significant advantage over methods like \texttt{OR} and \texttt{IPW} estimators, which are only valid if their respective underlying models are correct. This makes our proposal a more reliable choice for practitioners, providing the same protection against partial misspecification as the classical \texttt{DR} approach.
\end{remark}

We now turn to the question of optimality. In the semiparametric framework, the efficiency bound provides the natural benchmark: it is the smallest asymptotic variance achievable by any RAL estimator, and the standard \texttt{DR} estimator is known to attain it whenever the product rate is parametric. A key desideratum for our \ACC{} proposal is therefore that it does not sacrifice this optimality in well-specified settings. The following result shows that no such trade-off occurs: when the nuisance models converge at a sufficiently fast product rate, the clipping mechanism becomes asymptotically inactive and \ACC{} inherits the full efficiency of \texttt{DR}. This equivalence enables valid inference based on asymptotic normality without any modification to standard procedures.

\begin{theorem}[Semiparametric efficiency]
\label{th:efficiency}
    Assume the identifiability conditions described in Assumption~\ref{ass:MAR}. Assume also that $\norm{\hat\mu - \mutarget} = o_\mathbbmss{P}\left(n^{-\alpha} \right)$ and $\norm{(\hat\pi - \pitarget)/\hat\pi} = o_\mathbbmss{P}\left(n^{-\beta} \right)$ with $\alpha + \beta = 1/2$. If the slack sequence $\delta_n$ satisfies $\delta_n n^\gamma \to \infty$, where $\gamma=\max\{\alpha, \beta\}$, then the \ACC{} estimator is asymptotically equivalent to the \texttt{DR} estimator:
    \begin{equation}
        \acc = \dr + o_\mathbbmss{P}\left( n^{-1/2} \right)\,.
    \end{equation}
\end{theorem}

\begin{remark}[Choice of $\delta_n$] The slack sequence $\delta_n$ governs a fundamental trade-off between safety and efficiency. Theorem~\ref{th:efficiency} requires $\delta_n n^\gamma \to \infty$ where $\gamma = \max\{\alpha, \beta\}$ is the faster of the two nuisance convergence rates. This ensures the clipping interval eventually contains the correction term whenever the nuisance models converge at the product rate, making the clipping asymptotically inactive and preserving semiparametric efficiency. In practice, $\gamma$ is unknown, but a conservative choice is $\gamma = 1/4$, which corresponds to the minimal rate required for standard doubly robust estimators to achieve $\sqrt{n}$-consistency. This yields the default recommendation $\delta_n = \sigma_Y \log(n)/n^\gamma$, where $\sigma_Y$ is a scale factor that can be estimated from the data. Larger values of $\delta_n$ widen the clipping interval, making clipping less likely and thus prioritizing efficiency at the expense of safety protection. Smaller values of $\delta_n$ tighten the interval, increasing the frequency of clipping and thus prioritizing safety at the expense of efficiency. Intuitively, $\delta_n$ should be thought of as the tolerance within which a well-behaved \texttt{DR} estimator is expected to fall: $\delta_n$ is chosen to generate a clipping interval ensuring that, with high probability, the \texttt{DR} estimator is not clipped if at least one nuisance is correctly specified. The special case $\delta_n = 0$, analyzed separately in Supplementary Section~\ref{sec:delta0}, provides the strongest safety guarantee -- the \ACC{} estimate is always a convex combination of \texttt{OR} and \texttt{IPW} -- but may clip even when \texttt{DR} is well-behaved, potentially sacrificing some efficiency. We discuss this case in detail in Supplementary Section~\ref{sec:delta0}.
\end{remark}

We now move to safety, which is the key statistical property of our estimator.

\begin{theorem}[Pointwise safety]
\label{th:safety}
    Assume the identifiability conditions described in Assumption~\ref{ass:MAR}. Define 
    \begin{equation}
        \lambda = \frac{\clip{\frac{1}{n}\sum_{i=1}^n \frac{R_i \hat\mu(X_i)}{\hat\pi(X_i)}} - \left( \min \left\{\oreg, \ipw \right\} -  \delta_n \right)}{\left(\max \left\{\oreg, \ipw \right\} + \delta_n\right) - \left(\min \left\{\oreg, \ipw \right\} - \delta_n\right)} \in [0,1]\,.
    \end{equation}
    Then, we have
    \begin{equation}
    \begin{split}
        \left|\acc - \thetatarget \right| &\leq \Tilde\lambda \left|\oreg - \thetatarget\right| + \left( 1 -\Tilde\lambda\right) \left|\ipw - \thetatarget\right| + \delta_n \\
        &\leq \max\left\{\left|\oreg - \thetatarget\right|, \left|\ipw - \thetatarget\right| \right\} + \delta_n  \,,
    \end{split}
    \end{equation}
    where $\Tilde{\lambda} = \lambda\onea{\oreg \leq \ipw} + (1-\lambda)\onea{\oreg > \ipw} \in \{0,1\}$.
\end{theorem}

\begin{remark}
    This safety property holds regardless of the quality of the nuisance function estimates. It provides a strict guarantee that the \ACC{} estimator error is bounded by the worst of its constituent parts, which implies that the clipped estimator is better than the standard \texttt{DR} estimator whenever the latter fails by performing worse than its components. This, combined with the consistency result in Theorem \ref{th:consistency} and Theorem \ref{th:efficiency}, creates a powerful set of assurances. If at least one nuisance model is well-specified, the estimator is consistent for the true parameter. If nuisance models converge to their population counterparts at a parametric product rate, then \ACC{} preserves the semiparametric efficiency of \texttt{DR}. If both models are misspecified, the estimator performance is guaranteed to be no worse than the maximum of its components. This behavior is fundamentally different from the standard \texttt{DR} estimator, which can be catastrophically wrong in the same scenario. The reason for this difference can be understood by comparing their errors. The error of the standard \texttt{DR} estimator depends on the product of the errors of the two nuisance models. In contrast, the error of the \ACC{} estimator is bounded by a convex combination of the errors of the \texttt{OR} and \texttt{IPW} estimators, as shown in Theorem \ref{th:safety}. In turn, this convex combination is bounded by the maximum of the two errors. When both nuisance models are substantially incorrect, the product of their errors can be far larger than their maximum. This is, again, the source of the double fragility phenomenon, where the standard \texttt{DR} estimator amplifies errors, while the \ACC{} contains them.
\end{remark}

The maximum bound provided in Theorem~\ref{th:safety} is a worst-case guarantee and is often conservative. As observed in simulations (e.g., Figure~\ref{fig:intro_plot_n1000} and Section~\ref{sec:sim} below), the performance of the \ACC{} estimator is often much better than this upper bound and can in fact be closer to the minimum of the errors of its component estimators. To understand why, it is helpful to revisit the geometry of the clipping mechanism. The \ACC{} estimator coincides with \texttt{DR} whenever the latter falls inside the clipping interval, and falls back to the base estimators only when \texttt{DR} strays outside it. Crucially, this distinction tracks model specification: $\PP{\dr \in \left[\min\{\oreg, \ipw\} - \delta_n, \max\{\oreg, \ipw\} + \delta_n\right]}\to 1$ if $\dr \pto \thetatarget$, which in turn holds if at least one nuisance model is correctly specified. In other words, clipping is active only if double fragility occurs. When at least one nuisance is correct, \ACC{} inherits the favorable behavior of \texttt{DR}; when both are misspecified, \ACC{} automatically switches to the protection offered by the base estimators. The safety bound captures the worst case of this switch, but a sharper picture emerges by characterizing the risk of \ACC{} relative to the best available estimator in each regime. This is the content of the following oracle inequality.

\begin{theorem}[Adaptive oracle property]
\label{th:oracle}
    Assume the identifiability conditions described in Assumption~\ref{ass:MAR}. Denote the squared loss of a generic estimator $\hat\theta$ as $\ell(\hat\theta) = (\hat\theta-\thetatarget)^2$ and the MSE risk as $\risk{\hat\theta} = \EE{\ell(\hat\theta)}$. Then the \ACC{} estimator satisfies the following bounds simultaneously:
    \begin{itemize}
        \item \textbf{\texttt{DR}-aware pointwise bound.} For every $j\in\{\texttt{OR}, \texttt{IPW}, \texttt{DR}\}$, 
        \begin{equation}
        \risk{\acc} \leq \risk{\hat\theta_j} + \Pi_j \,,
    \end{equation}
    where the penalty term $\Pi_j$ is defined as:
    \begin{equation}
        \Pi_j = \EE{\left(\ell^\star - \ell(\hat\theta_j)\right)_+}\,,
    \end{equation}
    with $\ell^\star = \ell(\dr) \onea{E} + \min\{\ell(\oreg),\ell(\ipw)\} \onea{E^C}$ being the loss of an oracle knowing when the event $E = \left\{ (\dr - \acc) (\thetatarget - \acc) \leq 0 \right\}$ is verified.
    
    \item \textbf{\texttt{DR}-free Minkowski bound.} For every $\eta>0$, 
    \begin{equation}
        \risk{\acc} \leq (1+\eta)\min\left\{\risk{\oreg}, \risk{\ipw}\right\} +
        \left(1+\frac{1}{\eta}\right) V_n\,,
    \end{equation}
    where the disagreement term $V_n$ satisfies:
    \begin{equation}
         V_n \leq 2 \left( \EE{\left(\oreg - \ipw \right)^2} + \delta_n^2\right)\,.
    \end{equation}
    \end{itemize}
\end{theorem}

\begin{remark}
Theorem~\ref{th:oracle} establishes two complementary risk bounds that together characterize the adaptive behavior of \ACC{}. The \texttt{DR}-aware bound compares \ACC{} to each candidate estimator via the event-wise oracle loss $\ell^\star$ -- the loss of an infeasible oracle that uses \texttt{DR} on $E$ and the better of \texttt{OR} and \texttt{IPW} on $E^C$. The \ACC{} estimator dominates this oracle pointwise, with a penalty that vanishes whenever clipping is asymptotically inactive. The \texttt{DR}-free Minkowski bound provides a complementary guarantee requiring no conditions on \texttt{DR}: the risk of \ACC{} is controlled by the best base estimator risk plus a penalty measuring disagreement between \texttt{OR} and \texttt{IPW}. This bound is immune to double fragility. The two bounds have complementary failure modes -- the first is sharp when \texttt{DR} is well-behaved, the second when it is not -- and their minimum automatically selects the tighter guarantee depending on the regime, without \ACC{} requiring to know which applies.
\end{remark}

\section{Simulation study}
\label{sec:sim}
To empirically validate the concept of double fragility and the safety of our proposed \ACC{} estimator, we conduct a simulation study that fully replicates the well-known design of \citet{kang2007demystifying}. This setup is ideal for our purposes as it is explicitly designed to create the possible scenarios of nuisance model specification, allowing for a rigorous evaluation of estimator performance. That is, the simulation is designed to create scenarios where both the models for the regression function and the propensity score can be specified either correctly or incorrectly. For each unit $i = 1, \dots, n$ we generate data through the following steps:
\begin{itemize}
    \item Latent variable generation: 4 independent latent variables are drawn from a standard normal distribution:
    \begin{equation}
        (T_{i1}, T_{i2}, T_{i3}, T_{i4}) \sim\Normal{0}{I_4}\,,
    \end{equation}
    where $I_4$ is the $4 \times 4$ identity matrix. These latent variables form the basis for both the outcome and the missingness mechanism.
    \item Outcome generation: the outcome variable, $Y_i$, is generated as a linear function of the latent variables $T_{ij}$ with an added standard normal error term, $\varepsilon_i$:
    \begin{equation}
        Y_i = 210 + 27.4T_{i1} + 13.7T_{i2} + 13.7T_{i3} + 13.7T_{i4} + \varepsilon_i\,, \quad \text{where} \quad \varepsilon_i \sim \Normal{0}{1}\,.
    \end{equation}
    This constitutes the true outcome model. The resulting population mean is $\thetatarget = \EE{Y_i} = 210$.
    \item Missingness mechanism: the probability of the outcome $Y_i$ being observed, denoted $\pi(T_i)$, is determined by a logistic regression model that is also linear in the latent variables:
    \begin{equation}
        \text{logit}(\pi(T_i)) = \text{logit}\left(\PP{R_i=1 \mid T_{i1},T_{i2},T_{i3},T_{i4}}\right) = -T_{i1} + 0.5T_{i2} - 0.25T_{i3} - 0.1T_{i4}\,.
    \end{equation}
    This defines the true propensity score model. The response indicator for each unit, $R_i$, is then drawn from a Bernoulli distribution with this probability:
    \begin{equation}
        R_i \sim \text{Bernoulli}(\pi(T_i))\,.
    \end{equation}
    \item Observed covariates: instead of observing the true latent variables, the analyst is presented with a set of four covariates which are non-linear transformations of the original ones:
    \begin{equation}
            X_{i1} = \exp\left(\frac{T_{i1}}{2}\right)\,, \quad
            X_{i2} = \frac{T_{i2}}{1+\exp(T_{i1})} + 10\,,\quad
            X_{i3} = \left(\frac{T_{i1}T_{i3}}{25} + 0.6\right)^3\,, \quad
            X_{i4} = (T_{i2} + T_{i4} + 20)^2\,.
    \end{equation}
    An analyst who fits a linear model for the outcome or a logistic model for the propensity score using these observed covariates $X_i$ will have a misspecified model. A correct specification would require the analyst to know the true latent variables $T_i$ or the exact inverse transformations.
\end{itemize}

\begin{figure}[t]
    \centering
    \includegraphics[width=\linewidth]{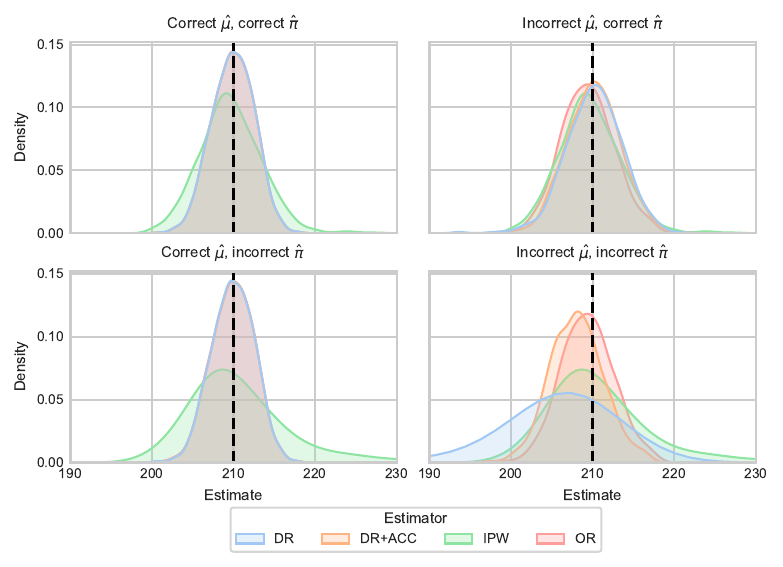}
    \caption{Sampling distributions for the Outcome Regression (\texttt{OR}), Inverse Probability Weighting (\texttt{IPW}), Doubly Robust (\texttt{DR}), and our proposed \ACC{} estimators from 1000 simulations with a sample size of $n=200$. The true parameter value is 210. The four panels show the estimators' performance under all combinations of correct and incorrect nuisance model specifications. The top row and bottom-left panel demonstrate the \textit{asymptotic hard thresholding} property. When at least one nuisance model is correct, both \texttt{DR} and \ACC{} align with the correct simpler estimator. The bottom-right panel illustrates \textit{double fragility}. When both nuisance models are wrong, the bias of the standard \texttt{DR} estimator is substantially worse than that of either the \texttt{OR} or \texttt{IPW} estimators. In this challenging scenario, our proposed \ACC{} estimator is shown to be \textit{safe}, providing a much more stable and accurate estimate than the standard \texttt{DR} estimator. Full details of the simulation scenario are provided in Section \ref{sec:sim}.}
    \label{fig:intro_plot_n200}
\end{figure}

\begin{table}
\centering
\caption{Simulation results for $n=200$ and $n=1000$, grouped by nuisance model specification. The results highlight the double fragility of the standard \texttt{DR} estimator and the safety of the \ACC{} estimator across both sample sizes.}
\label{tab:sim_results}
\begin{tabular}{ll rrr rrr}
\toprule
& & \multicolumn{3}{c}{$n = 200$} & \multicolumn{3}{c}{$n = 1000$} \\
\cmidrule(lr){3-5} \cmidrule(lr){6-8}
\textbf{Scenario} & \textbf{Estimator} & \textbf{Bias} & \textbf{RMSE} & \textbf{MAE} & \textbf{Bias} & \textbf{RMSE} & \textbf{MAE} \\
\midrule
\multirow{4}{*}{Correct $\hat\mu$, Correct $\hat\pi$} & \texttt{OR} & -0.072 & 2.568 & 1.853 & -0.002 & 1.128 & 0.761 \\
 & \texttt{IPW} & -0.311 & 3.859 & 2.464 & -0.019 & 1.688 & 1.098 \\
 & \texttt{DR} & -0.072 & 2.570 & 1.851 & -0.002 & 1.128 & 0.767 \\
 & \ACC{} & -0.072 & 2.570 & 1.851 & -0.002 & 1.128 & 0.767 \\
\midrule
\multirow{4}{*}{Correct $\hat\mu$, Incorrect $\hat\pi$} & \texttt{OR} & -0.072 & 2.568 & 1.853 & -0.002 & 1.128 & 0.761 \\
 & \texttt{IPW} & 1.595 & 9.726 & 3.412 & 4.761 & 11.095 & 2.561 \\
 & \texttt{DR} & -0.076 & 2.574 & 1.840 & 0.018 & 1.314 & 0.786 \\
 & \ACC{} & -0.073 & 2.569 & 1.840 & 0.025 & 1.302 & 0.784 \\
\midrule
\multirow{4}{*}{Incorrect $\hat\mu$, Correct $\hat\pi$} & \texttt{OR} & -0.665 & 3.306 & 2.273 & -0.814 & 1.678 & 1.179 \\
 & \texttt{IPW} & -0.311 & 3.859 & 2.464 & -0.019 & 1.688 & 1.098 \\
 & \texttt{DR} & 0.160 & 3.448 & 2.268 & 0.028 & 1.643 & 1.069 \\
 & \ACC{} & 0.082 & 3.242 & 2.194 & -0.148 & 1.524 & 1.054 \\
\midrule
\multirow{4}{*}{Incorrect $\hat\mu$, Incorrect $\hat\pi$} & \texttt{OR} & -0.665 & 3.306 & 2.273 & -0.814 & 1.678 & 1.179 \\
 & \texttt{IPW} & 1.595 & 9.726 & 3.412 & 4.761 & 11.095 & 2.561 \\
 & \texttt{DR} & -6.395 & 21.932 & 4.026 & -13.467 & 77.565 & 5.295 \\
 & \ACC{} & -1.962 & 3.812 & 2.620 & -1.601 & 2.160 & 1.663 \\
\bottomrule
\end{tabular}
\end{table}

We first assess performance in terms of estimation accuracy, which we measure through three different metrics. \textit{Bias} measures the difference between the estimated mean and the true mean ($\thetatarget=210$). The \textit{root mean squared error} (RMSE) is the square root of the average squared difference between the estimate and the true mean, providing a comprehensive measure of an estimator's overall accuracy by combining both its bias and variance. Finally, the \textit{median absolute error} (MAE) is a robust measure of precision, calculated as the median of the absolute errors. 

Figures \ref{fig:intro_plot_n1000} and \ref{fig:intro_plot_n200}, together with Table \ref{tab:sim_results}, summarize the results of our simulation study, providing strong empirical support for our central arguments across both sample sizes ($n=200$ and $n=1000$).
In the three scenarios where at least one nuisance model is correctly specified, the results align with classical theory. Both the standard doubly robust (\texttt{DR}) and our proposed \ACC{} estimators perform excellently, exhibiting minimal bias and RMSE that is comparable to the best-performing correctly specified estimator (\texttt{OR} or \texttt{IPW}). This confirms their double robustness and the asymptotic hard thresholding property in practice. The most critical insights come from the scenario where both nuisance models are misspecified. Here, the \texttt{DR} estimator fails catastrophically, demonstrating the phenomenon of double fragility. As shown in Table \ref{tab:sim_results}, its RMSE explodes, increasing from approximately 2.570 to 21.932 for $n=200$, and from approximately 1.128 to 77.565 for $n=1000$ -- an error far exceeding that of either the misspecified \texttt{OR} or \texttt{IPW} estimators. In stark contrast, our \ACC{} estimator remains stable in this challenging setting. Its RMSE is dramatically lower than that of the standard \texttt{DR} estimator (3.812 vs.~21.932 for $n=200$; 2.160 vs.~77.565 for $n=1000$) and is comparable to the better between the two simpler \texttt{OR} and \texttt{IPW} estimators. Similarly, Supplementary Figures \ref{fig:scatter200} and \ref{fig:scatter1000} display the relationship between the estimates provided by \ACC{} and the other estimators. Again, \ACC{} estimates follow the standard \texttt{DR} when the latter is stable but remains bounded when the standard \texttt{DR} produces extreme, unstable estimates. These results (together with the ones for $n=100$ in Supplementary Figures \ref{fig:intro_plot_n100} and \ref{fig:scatter100}) empirically validate the safety property of our method, demonstrating that it successfully mitigates the critical failure mode of the standard \texttt{DR} approach without sacrificing performance in well-behaved settings.

We also evaluate performance in terms of inference, using two metrics: \textit{empirical coverage} (the fraction of experiments in which confidence intervals contain the true parameter) and the average confidence interval (CI) \textit{width}. The results for all four nuisance specification scenarios are presented in Table \ref{tab:coverage_results}. The results in the ideal scenario where both the outcome regression and propensity score models are correctly specified, the only scenario in which confidence intervals are theretically guaranteed to be asymptotically valid \citep{kennedy2024semiparametric}, validate that the inference procedure based on asymptotic normality for the \ACC{} estimator achieves the nominal 95\% coverage level, performing almost identically to the standard \texttt{DR} and \texttt{OR} estimators at both sample sizes ($n=200$ and $n=1000$). In terms of efficiency, the \texttt{IPW} estimator is clearly suboptimal, producing extremely wide confidence intervals. In contrast, the \texttt{DR} and \ACC{} estimators are highly efficient, yielding CIs that are nearly identical in width to the efficient \texttt{OR} estimator. This confirms that the safety mechanism of the \ACC{} estimator does not compromise its statistical efficiency in well-behaved settings where the risk of fragility is absent. While our theoretical guarantees do not cover misspecified settings, our simulation results also provide valuable insights into the practical robustness of our approach. When one of the two nuisance functions is misspecified, our \ACC{} confidence intervals remain bonded to the standard \texttt{DR} CIs. In addition, when both nuisance functions are wrong, the advantage of our approach becomes even clearer. The standard \texttt{DR} confidence intervals suffer from undercoverage, dropping to 72.2\% at $n=1000$. In contrast, our \ACC{} confidence intervals maintain nominal coverage (96.6\%). This, together with the results for $n=100$ in Supplementary Table \ref{tab:coverage_results_n100}, suggests that the safety property of the \ACC{} estimator not only improves point estimation but also leads to more reliable inference in the realistic setting of complete model misspecification.

\begin{table}
\centering
\caption{Empirical coverage and confidence interval (CI) widths for $n=200$ and $n=1000$, grouped by nuisance model specification. The nominal coverage level is 95\%.}
\label{tab:coverage_results}
\begin{tabular}{ll rr rr}
\toprule
& & \multicolumn{2}{c}{$n = 200$} & \multicolumn{2}{c}{$n = 1000$} \\
\cmidrule(lr){3-4} \cmidrule(lr){5-6}
\textbf{Scenario} & \textbf{Estimator} & \textbf{Coverage} & \textbf{Width} & \textbf{Coverage} & \textbf{Width} \\
\midrule
\multirow{4}{*}{Correct $\hat\mu$, Correct $\hat\pi$} & \texttt{OR} & 0.949 & 10.00 & 0.952 & 4.48 \\
 & \texttt{IPW} & 1.000 & 92.51 & 1.000 & 41.49 \\
 & \texttt{DR} & 0.950 & 10.01 & 0.951 & 4.49 \\
 & \ACC{} & 0.950 & 10.01 & 0.951 & 4.49 \\
\midrule
\multirow{4}{*}{Correct $\hat\mu$, Incorrect $\hat\pi$} & \texttt{OR} & 0.949 & 10.00 & 0.952 & 4.48 \\
 & \texttt{IPW} & 1.000 & 164.35 & 1.000 & 150.79 \\
 & \texttt{DR} & 0.952 & 10.03 & 0.955 & 4.63 \\
 & \ACC{} & 0.952 & 10.03 & 0.955 & 4.63 \\
\midrule
\multirow{4}{*}{Incorrect $\hat\mu$, Correct $\hat\pi$} & \texttt{OR} & 0.875 & 10.30 & 0.821 & 4.58 \\
 & \texttt{IPW} & 1.000 & 92.51 & 1.000 & 41.49 \\
 & \texttt{DR} & 0.963 & 13.68 & 0.970 & 6.63 \\
 & \ACC{} & 0.963 & 13.68 & 0.973 & 6.63 \\
\midrule
\multirow{4}{*}{Incorrect $\hat\mu$, Incorrect $\hat\pi$} & \texttt{OR} & 0.875 & 10.30 & 0.821 & 4.58 \\
 & \texttt{IPW} & 1.000 & 164.35 & 1.000 & 150.79 \\
 & \texttt{DR} & 0.922 & 27.58 & 0.722 & 43.50 \\
 & \ACC{} & 0.952 & 27.58 & 0.966 & 43.50 \\
\bottomrule
\end{tabular}
\end{table}

\section{Application to Alzheimer's disease proteomics}
\label{sec:app}
To demonstrate the practical efficacy of our method in a real-world scientific setting, we revisit an application from \citet{moon2025augmented}, focusing on the impact of Alzheimer's disease (AD) on the human proteome. AD is a prominent neurodegenerative disorder, and while many contributing factors are known, its underlying biological pathways are still being discovered. Bulk peptide-level datasets offer a valuable opportunity to explore these pathways, but they also present complex modeling challenges where the risk of misspecification is high. Our objective is to estimate the average treatment effect (ATE) of an AD diagnosis on the abundances of various peptides. This estimand, which is the difference between two means, fits naturally within our missing data framework.

We use the peptide abundance data from \citet{merrihew2023peptide}, which contains samples from individuals with and without dementia. Following previous work \citep{moon2025augmented}, we define our treatment variable by grouping samples into two categories: cases (individuals with autosomal dominant or sporadic AD dementia) and controls (individuals without dementia, with or without a high AD histopathologic burden). To ensure data quality, we focus on the 270 peptides that present no more than 10\% missing values across the 220 observations in the data set, and we impute the remaining missing values through \texttt{MissForest} \citep{stekhoven2012missforest}.

For our analysis, we estimate the two nuisance functions using standard approaches. For the outcome model, we use a difference-in-means estimator. While simple, this approach is routinely used in the analysis of peptide abundance data \citep{chen2020bioinformatics} and serves as a relevant baseline. For the propensity score model, we fit a logistic regression of the AD diagnosis on some available external covariates, namely brain region, post-mortem interval (PMI), age, and gender.
To improve stability, the estimated propensity scores are rescaled in the spirit of a Hájek estimator \citep{basu1971essay}: we first multiply each estimated propensity score $\Tilde{\pi}(X_i)$ by $n^{-1} \sum_{i=1}^n A_i/\Tilde{\pi}(X_i)$ and $n^{-1} \sum_{i=1}^n (1-A_i)/(1-\Tilde{\pi}(X_i))$ for treatment and control observations respectively, and then we clip the result so that it ranges between 0 and 1. We then compute the \texttt{OR}, \texttt{IPW}, \texttt{DR}, and our proposed \ACC{} estimators. 

The results, illustrated in Figure \ref{fig:app} and Supplementary Figure \ref{fig:app_scatter}, highlight the practical impact of double fragility in a real-world setting. For several peptides, the standard \texttt{DR} estimator produces ATE estimates that deviate from the \texttt{OR} and \texttt{IPW} estimates. In contrast, our \ACC{} estimator consistently provides more credible results that remain anchored between the two simpler \texttt{OR} and \texttt{IPW} estimators, demonstrating the importance of its safety property for drawing reliable scientific conclusions when the true data-generating process is unknown. 

\begin{figure}
    \centering
    \includegraphics[width=\linewidth]{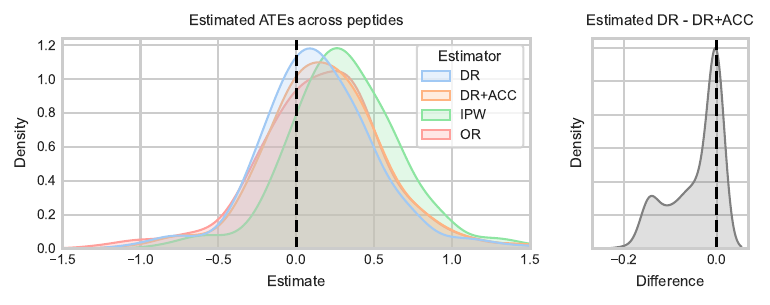}
    \caption{In the left panel, we show the distributions of the estimated average treatment effects (ATEs) of Alzheimer's disease across 270 different peptides. All estimators produce distributions centered near zero, suggesting that for most peptides, the estimated effect of AD is small. However, some of the estimates from the standard \texttt{DR} estimator are outside the range defined by the estimates of \texttt{OR} and \texttt{IPW}, providing a real-world example of double fragility. The stability of the \ACC{} estimator, which remains aligned with the more plausible \texttt{OR} and \texttt{IPW} results, demonstrates the practical importance of its safety property for drawing reliable scientific conclusions. In the right panel, we focus on the difference between the \texttt{DR} estimator and \ACC{}, plotting the distribution between the estimates provided by the two estimators.}
    \label{fig:app}
\end{figure}

We 
now present a preliminary analysis of the 
above estimates.
Switching from the standard \texttt{DR} 
to the proposed \ACC{} 
has substantial consequences.
Inference based on the standard \texttt{DR} estimator identifies 55 peptides as differentially abundant in cases vs.~controls (at a significance level of $\alpha=0.05$, without multiple testing correction). In contrast, inference using
the \ACC{} estimator
identifies 67 significant peptides. This includes 54 of the 55 peptides identified by the standard \texttt{DR} estimator, plus 13 additional findings unique to the \ACC{} estimator. These additional peptides map to proteins translated from 13 distinct primary genes. Notably, these 13 genes have been independently implicated in Alzheimer's disease by prior research, providing some external validation for our findings. A full list of the genes and the supporting literature is available in Supplementary Table \ref{tab:gene_list}. While a full biological interpretation is beyond the scope of this paper, these results suggest that the \ACC{} estimator, by mitigating the instability inherent in the standard \texttt{DR} approach, can indeed uncover credible and scientifically important biological signals.

\section{Conclusions}
\label{sec:end}

This paper identifies and formally characterizes double fragility, a critical failure mode of standard doubly robust estimators that arises under the ubiquitous scenario of complete nuisance model misspecification. While these estimators are celebrated for their theoretical properties, we show that the very mechanism that provides robustness -- which we call asymptotic hard thresholding -- can, in practice, amplify errors and lead to 
instability. 

To address this, we propose the \ACC{} estimator. Our solution is both simple and powerful.
When at least one nuisance model is correct, our \ACC{} estimator
retains the desirable double robustness property of standard methods, ensuring consistency. When the nuisance functions converge to their population counterparts at a parametric product rate, \ACC{} is also semiparametric efficient. 
At the same time, our \ACC{}
posesses a crucial safety guarantee, 
which ensures that its performance 
cannot be worse than that of its simpler constituent parts when all models are misspecified, thus preventing catastrophic failure.
Our simulation study and application to peptide abundance data confirm that the \ACC{} estimator provides a practical and reliable alternative, offering practitioners the benefits of double robustness with a crucial safeguard against its fragility.

This work opens several avenues for future research. First, our \ACC{} framework could be extended to handle multivariate or other complex target parameters. Second, 
while a key practical advantage of our 
\ACC{} estimator is 
that it does not require the tuning of any additional parameters, 
alternative 
solutions for achieving safety, such as power-tuning or data-adaptive convex combinations of the \texttt{OR} and \texttt{IPW} estimators \citep{angelopoulos2023ppi++}, could and should be explored. 
Finally, our simulation study suggests that the bias of 
doubly robust estimators can be exacerbated, not reduced, as the sample size grows. Exploring the relationship between sample size and bias of \texttt{DR} under complete misspecification represents a further interesting research direction.

\section*{Acknowledgments}
L.T.~wishes to thank the members of the causal inference reading group at Carnegie Mellon University, Ana M.~Kenney and the other participants to the 2025 L'EMbeDS workshop \enquote{Learning from large, complex and structured data} held in Pisa, Italy, in June 2025 for helpful discussions. This project was supported by National Institute of Mental Health (NIMH) grant R01MH123184.

\bibliographystyle{plainnat}
\bibliography{bib}

\clearpage
\setcounter{page}{1}
\appendix
\section*{Supplementary Material}

\renewcommand\thefigure{\thesection.\arabic{figure}}
\renewcommand\thetable{\thesection.\arabic{table}}
\setcounter{figure}{0} 
\setcounter{table}{0}

\section{Technical lemmas}
\begin{lemma}[Continuity of the clip operator]
\label{lemma:cont_clip}
The function $\clip{x, y, z} = \max(y, \min(x, z))$ is a continuous function from $\RR^3$ to $\RR$.
\end{lemma}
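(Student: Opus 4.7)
The plan is to exhibit $\clip{x,y,z} = \max(y, \min(x,z))$ as a composition of elementary continuous functions on $\RR^3$. First I would record that the binary operations $\min, \max : \RR^2 \to \RR$ are continuous; the cleanest way to see this is via the closed-form identities $\min(a,b) = \tfrac{1}{2}(a+b - |a-b|)$ and $\max(a,b) = \tfrac{1}{2}(a+b + |a-b|)$, which express both maps as continuous combinations of addition, scalar multiplication, and the absolute value. Alternatively, one can appeal directly to the lattice Lipschitz bounds $|\min(a,b) - \min(a',b')| \leq \max(|a-a'|,|b-b'|)$ and its analogue for $\max$.

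Second, I would write the clip operator as the composition of continuous maps. Let $\pi_{13}(x,y,z) = (x,z)$ and $\pi_2(x,y,z) = y$ be coordinate projections, which are continuous. Then the intermediate map $(x,y,z) \mapsto (y, \min(x,z)) = (\pi_2(x,y,z), \min(\pi_{13}(x,y,z)))$ is continuous as a map $\RR^3 \to \RR^2$, being the pairing of two continuous scalar maps. Composing with the continuous map $\max : \RR^2 \to \RR$ yields $\clip{x,y,z} = \max(y, \min(x,z))$ as a continuous function, as required.

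There is no real obstacle here; the argument is essentially a bookkeeping exercise in composition of continuous maps, and the only care needed is to distinguish the three arguments clearly so that the nested $\min$ sees $(x,z)$ while the outer $\max$ sees $y$. If desired, an even more self-contained proof proceeds by establishing the global Lipschitz estimate $|\clip{x,y,z} - \clip{x',y',z'}| \leq \max(|x-x'|,|y-y'|,|z-z'|)$ via two applications of the lattice Lipschitz inequality above, which immediately implies uniform continuity on all of $\RR^3$ and is in fact stronger than what the lemma requires.
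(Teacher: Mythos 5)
Your proof is correct and follows essentially the same route as the paper's: both express $\clip{x,y,z}$ as a composition of coordinate projections with the continuous binary $\min$ and $\max$ maps and invoke closure of continuity under composition. Your added justifications (the $\tfrac{1}{2}(a+b\pm|a-b|)$ identities and the Lipschitz bound) are correct refinements but not a different argument.
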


\begin{proof}
The functions $g_1(x, y, z) = x$, $g_2(x, y, z) = y$, and $g_3(x, y, z) = z$ are continuous. The binary operators $\min(a, b)$ and $\max(a, b)$ are continuous functions from $\RR^2 \to \RR$. The composition of continuous functions is continuous. Since $\clip{x, y, z}$ is a composition of these continuous functions, it is itself continuous.
\end{proof}

\begin{lemma}[Exchange of limit and clip]
\label{lemma:lim_clip}
Let $\{f_n\}_{n=1}^{\infty}$, $\{L_n\}_{n=1}^{\infty}$, and $\{U_n\}_{n=1}^{\infty}$ be sequences of real numbers that converge to the limits $f$, $L$, and $U$ respectively, that is,
\begin{equation*}
    \lim_{n \to \infty} f_n = f\,, \quad \lim_{n \to \infty} L_n = L\,, \quad \lim_{n \to \infty} U_n = U\,.
\end{equation*}
Then, the limit of the clipped sequence is the clip of the limits:
\begin{equation}
     \lim_{n \to \infty} \clip{f_n, L_n, U_n} = \clip{f, L, U} 
\end{equation}
\end{lemma}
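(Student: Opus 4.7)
The plan is to derive this lemma as an immediate consequence of Lemma~\ref{lemma:cont_clip}, which establishes that $\operatorname{clip} : \RR^3 \to \RR$ is continuous. Since continuity is equivalent to sequential continuity on $\RR^3$ (as a first-countable space), once we show that the vector sequence $(f_n, L_n, U_n)$ converges to $(f, L, U)$ in $\RR^3$, the result follows by passing the continuous clip through the limit.

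First, I would observe that convergence in $\RR^3$ (under any equivalent norm, e.g.\ the Euclidean or max norm) is equivalent to componentwise convergence. Since by hypothesis $f_n \to f$, $L_n \to L$, and $U_n \to U$ as $n \to \infty$, the triple $(f_n, L_n, U_n)$ converges to $(f, L, U)$ in $\RR^3$. This is a standard fact that I would state in one line without a detailed $\varepsilon$--$N$ argument.

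Second, I would invoke Lemma~\ref{lemma:cont_clip} to conclude that $\operatorname{clip}$ is continuous at the point $(f, L, U)$. By the sequential characterization of continuity, continuous functions commute with limits of sequences, so
\begin{equation*}
    \lim_{n \to \infty} \clip{f_n, L_n, U_n} \;=\; \clip{\lim_{n \to \infty} f_n,\; \lim_{n \to \infty} L_n,\; \lim_{n \to \infty} U_n} \;=\; \clip{f, L, U},
\end{equation*}
which is the desired identity.

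There is essentially no main obstacle here; the lemma is a routine corollary of the preceding continuity lemma combined with the equivalence of componentwise and joint convergence in $\RR^3$. The only subtlety worth flagging explicitly is making sure the reader sees that the joint convergence of the triple, rather than the three separate scalar convergences, is what continuity of a function of three variables consumes; I would therefore phrase the transition through joint convergence carefully but briefly.
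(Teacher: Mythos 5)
Your argument is exactly the paper's: assemble the triple $(f_n,L_n,U_n)$, note that componentwise convergence gives convergence in $\RR^3$, and then apply the continuity of the clip operator from Lemma~\ref{lemma:cont_clip} via the sequential characterization of continuity. The proposal is correct and coincides with the paper's proof in both structure and substance.
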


\begin{proof}
Let the vector sequence be defined as $V_n = (f_n, L_n, U_n) \in \RR^3$. By the definition of convergence of a vector sequence, since each component converges, the vector sequence converges to the vector limit $V = (f, L, U)$, that is,
\begin{equation*}
    \lim_{n \to \infty} V_n = V\,.
\end{equation*}
By definition of function continuity for sequences, if a function $g: \RR^3 \to \RR$ is continuous at a point $V$, and a sequence $V_n \to V$, then the sequence $g(V_n)$ must converge to $g(V)$, that is,
\begin{equation*}
    \lim_{n \to \infty} g(V_n) = g(V)\,.
\end{equation*}
From Lemma~\ref{lemma:cont_clip}, we know that the function $g(V) = \clip{V}$ is continuous everywhere. We can therefore substitute our sequence $V_n$ and its limit $V$ into the continuity definition, getting
\begin{equation*}
    \lim_{n \to \infty} \clip{f_n, L_n, U_n} = \clip{f, L, U}\,,
\end{equation*}
which is equivalent to:
\begin{equation*}
    \lim_{n \to \infty} \clip{f_n, L_n, U_n} = \clip{\lim_{n \to \infty} f_n, \lim_{n \to \infty} L_n, \lim_{n \to \infty} U_n}\,.
\end{equation*}
\end{proof}

\begin{lemma}[Pointwise master inequality]
\label{lem:pointwise}
Pointwise on the sample space,
\begin{equation}
    \ell(\acc) \leq \ell^\star = \ell(\dr) \onea{E} + \min\{\ell(\oreg),\ell(\ipw)\} \onea{E^C}\,.
\end{equation}
\end{lemma}
\begin{proof}
Denote $L_n = \min\{\oreg,\ipw\} -\delta_n$, $U_n = \max\{\oreg,\ipw\} +\delta_n$, and $I_n = [L_n, U_n ]$. Partition the sample space into four mutually exclusive and exhaustive events:
    \begin{equation}
        \begin{split}
            E_{in} &= \{\thetatarget \in I_n\}, \\
            E_{out,mid} &= \{\thetatarget \notin I_n, \dr \in I_n\}, \\
            E_{out,opp} &= \{\thetatarget < L_n, \dr > U_n\} \cup \{\thetatarget > U_n, \dr < L_n\}, \\
            E_{out,same} &= \{\thetatarget < L_n, \dr < L_n\} \cup \{\thetatarget > U_n, \dr > U_n\}\,.
        \end{split}
    \end{equation}
    Aggregate the events in $E = E_{in} \cup E_{out,mid} \cup E_{out,opp}$ and $E^C = E_{out,same}$. 

    On $E_{in}$, the projection onto a closed convex set is non-expansive relative to points in the set, so $|\acc - \thetatarget| \le |\dr - \thetatarget|$, giving $\ell(\acc) \le \ell(\dr)$. 
    
    On $E_{out,mid}$, $\dr \in I_n$, so $\acc = \dr$ and $\ell(\acc) = \ell(\dr)$. 
    
    On $E_{out,opp}$, say $\thetatarget < L_n$ and $\dr > U_n$ (the symmetric case is identical), then $\acc = U_n$ and $\thetatarget < L_n \le U_n < \dr$, so $\acc$ lies strictly between $\thetatarget$ and $\dr$, yielding $\ell(\acc) \le \ell(\dr)$.
    
    On $E_{out,same}$, say $\thetatarget < L_n$ and $\dr < L_n$ (the symmetric case is again identical), then $\acc = L_n = \min\{\oreg,\ipw\} - \delta_n$. Since $\thetatarget < \min\{\oreg,\ipw\} - \delta_n < \min\{\oreg,\ipw\} \le \max\{\oreg,\ipw\}$, the nearer base estimator is $\min\{\oreg,\ipw\}$, so the best base error is $\min\{\oreg,\ipw\} - \thetatarget$. Then
    \begin{equation}
        |\acc - \thetatarget| = |\min\{\oreg,\ipw\} - \delta_n - \thetatarget| \leq \min\{|\oreg-\thetatarget|,|\ipw-\thetatarget|\}\,,
    \end{equation}
    hence $\ell(\acc) \leq \min\{\ell(\oreg), \ell(\ipw)\}$. Combining the four cases gives the stated bound.
\end{proof}

\begin{lemma}[Pointwise base-anchored bound]
\label{lem:base-Dn}
For each $j \in \{\texttt{OR}, \texttt{IPW}\}$, pointwise on the sample space,
\begin{equation}
    |\acc - \hat\theta_j| \leq |\oreg - \ipw| + \delta_n\,,
\end{equation}
and consequently
\begin{equation}
    |\acc - \thetatarget| \leq |\hat\theta_j - \thetatarget| + |\oreg - \ipw| + \delta_n\,.
\end{equation}
\end{lemma}

\begin{proof}
Since $\acc \in I_n = \left[ \min\{\oreg,\ipw\} - \delta_n, \max\{\oreg,\ipw\} + \delta_n \right]$ and $\hat\theta_j \in \left[\min\{\oreg,\ipw\}, \max\{\oreg,\ipw\} \right]$,
\begin{equation}
    |\acc - \hat\theta_j| \leq  \left( \max\{\oreg,\ipw\} - \min\{\oreg,\ipw\} \right) + \delta_n = |\oreg - \ipw| + \delta_n\,.
\end{equation}
The second inequality is the triangle inequality applied to
$|\acc - \thetatarget| \leq |\hat\theta_j - \thetatarget| + |\acc - \hat\theta_j|$.
\end{proof}

\begin{lemma}[Asymptotic expansion of clip] 
 \label{lem:aec}
 Let $\hat C = n^{-1} \sum_{i=1}^n R_i \hat\mu(X_i) / \hat\pi(X_i)$. Assume the asymptotic expansions in Equation \ref{eq:asymp_exp} hold. Then we have
    \begin{equation}
        \clip{\hat C, L_n, U_n} = \thetatarget + \frac{1}{\sqrt{n}}\clip{\zcorr, \min\{\zoreg, \zipw\}, \max\{\zoreg, \zipw\}} + o_p(n^{-1/2})\,,
    \end{equation}
    where $L_n = \min\{\oreg,\ipw\}$ and $U_n = \max\{\oreg,\ipw\}$.
\end{lemma}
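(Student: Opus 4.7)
My plan is to exploit the translation- and positive-scaling-equivariance of $\clip{\cdot}$ to reduce the claim to a direct consequence of the joint asymptotic linearity of the three component estimators. First, I would record the exact algebraic identity
\begin{equation*}
\clip{\hat C, L_n, U_n} = \thetatarget + \frac{1}{\sqrt{n}}\,\clip{\sqrt{n}(\hat C - \thetatarget),\, \sqrt{n}(L_n - \thetatarget),\, \sqrt{n}(U_n - \thetatarget)}\,,
\end{equation*}
which is immediate from $\max(a+c,b+c)=\max(a,b)+c$ and $\max(ca,cb)=c\max(a,b)$ for $c>0$ (and the analogous identities for $\min$), applied with $c=\sqrt{n}$ and the common translation $\thetatarget$. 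Since $L_n=\min\{\oreg,\ipw\}$ and $U_n=\max\{\oreg,\ipw\}$, the same equivariance also yields $\sqrt{n}(L_n-\thetatarget)=\min\{\sqrt{n}(\oreg-\thetatarget),\sqrt{n}(\ipw-\thetatarget)\}$ and symmetrically for $U_n$.

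Second, I would pin down the symbols $\zoreg,\zipw,\zcorr$ as the leading-order influence-function partial sums delivered by the RAL expansion~(\ref{eq:ral}) for each of the three component estimators. Under this convention they live on the same probability space as the data, their joint weak limit is the centered Gaussian announced after Equation~\ref{eq:asymp_exp}, and the asymptotic linearity assumption gives simultaneously
\begin{equation*}
\sqrt{n}(\oreg-\thetatarget)=\zoreg+o_\mathbbmss{P}(1),\qquad \sqrt{n}(\ipw-\thetatarget)=\zipw+o_\mathbbmss{P}(1),\qquad \sqrt{n}(\hat C-\thetatarget)=\zcorr+o_\mathbbmss{P}(1)\,.
\end{equation*}

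Third, I would feed these three perturbed inputs through the composition $(x,a,b)\mapsto\clip{x,\min\{a,b\},\max\{a,b\}}$. Each of $\min$ and $\max$ is $1$-Lipschitz in each argument, hence so is $\clip{\cdot,\cdot,\cdot}$ (Lemma~\ref{lemma:cont_clip} gives continuity and the Lipschitz bound follows from the nested $\max/\min$ representation), so an $o_\mathbbmss{P}(1)$ perturbation of the three arguments produces an $o_\mathbbmss{P}(1)$ perturbation of the output. Substituting the three RAL expansions of the second step into the algebraic identity of the first step, and then dividing by $\sqrt{n}$, yields exactly
\begin{equation*}
\clip{\hat C, L_n, U_n} = \thetatarget + \frac{1}{\sqrt{n}}\,\clip{\zcorr,\min\{\zoreg,\zipw\},\max\{\zoreg,\zipw\}} + o_\mathbbmss{P}(n^{-1/2})\,.
\end{equation*}

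The main obstacle is interpretational rather than technical: the claim is a stochastic expansion with an $o_\mathbbmss{P}(n^{-1/2})$ remainder, not merely a distributional convergence statement, so one must pin down $\zoreg,\zipw,\zcorr$ as genuine random variables on the sample probability space — specifically as the influence-function averages $n^{-1/2}\sum_i\varphi(\data_i;\cdot)$ arising in the RAL representation — rather than as abstract weak limits. Once this convention is adopted, the remainder of the proof reduces to the three elementary ingredients above: equivariance of $\clip{\cdot}$ under simultaneous affine reshaping, $1$-Lipschitzness of $\min,\max,\clip{\cdot}$, and the assumed joint asymptotic linearity of the three component estimators.
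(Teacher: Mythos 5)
Your proof is correct and follows essentially the same route as the paper's: both arguments rest on the translation- and positive-scale-equivariance of the clip/min/max operators to factor out $\thetatarget$ and $1/\sqrt{n}$, and then absorb the $o_p(1)$ remainders. If anything, your version is slightly more careful on two points the paper glosses over: you pin down $\zoreg,\zipw,\zcorr$ as influence-function averages living on the sample probability space (which is needed for an $o_p(n^{-1/2})$ expansion to be meaningful, since Equation~\ref{eq:asymp_exp} as written only asserts weak convergence), and you invoke $1$-Lipschitzness of the clip rather than mere continuity, which is what actually lets an $o_p(1)$ perturbation of the arguments pass through the operator without an additional tightness argument.
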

\begin{proof}
The proof proceeds by substituting the given asymptotic expansions into the $\text{clip}$ function and simplifying, leveraging the properties of the $\min$, $\max$, and $\text{clip}$ operators. We first expand the clipping bounds. For $L_n$:
\begin{equation}
    L_n = \min\{\oreg, \ipw\} = \min\left\{\thetatarget + \frac{\zoreg}{\sqrt{n}} + o_p(n^{-1/2}), \thetatarget + \frac{\zipw}{\sqrt{n}} + o_p(n^{-1/2})\right\}\,.
\end{equation}
Since $\min(a+x, a+y) = a + \min(x,y)$, we can factor out the common $\thetatarget$ term, getting:
\begin{equation}
    L_n = \thetatarget + \min\left\{\frac{\zoreg}{\sqrt{n}} + o_p(n^{-1/2}),\frac{\zipw}{\sqrt{n}} + o_p(n^{-1/2})\right\}\,.
\end{equation}
The same logic applies to the upper bound $U_n$. Now we can substitute the expansions for $\hat C$, $L_n$, and $U_n$ into the main expression:
\begin{equation}
    \clip{\hat C, L_n, U_n} = \clip{\thetatarget + \frac{\zcorr}{\sqrt{n}} + o_p(n^{-1/2}), \thetatarget + \min\{\dots\}, \thetatarget + \max\{\dots\}}\,.
\end{equation}
The clip function is translation equivariant ($\clip{x+a, y+a, z+a} = a + \clip{x, y, z}$), so that we can factor out the common $\thetatarget$ term:
\begin{equation}
    \clip{\hat C, L_n, U_n} = \thetatarget + \clip{\frac{\zcorr}{\sqrt{n}} + o_p(n^{-1/2}), \min\left\{\frac{\zoreg}{\sqrt{n}} + \dots\right\}, \max\left\{\frac{\zipw}{\sqrt{n}} +\dots\right\}}
\end{equation}
The clip, min, and max functions are also scale equivariant for positive constants. We can factor out the $1/\sqrt{n}$ term:
\begin{equation}
    \clip{\hat C, L_n, U_n} = \thetatarget + \frac{1}{\sqrt{n}} \clip{ \zcorr + o_p(1), \min\{\zoreg + o_p(1), \zipw + o_p(1)\}, \max\{\dots\} }
\end{equation}
Because the clip, min, and max functions are all continuous, the small $o_p(1)$ terms (which converge to zero in probability) can be pulled out of the functions, resulting in a single remainder term:
\begin{equation}
    \clip{\hat C, L_n, U_n} = \thetatarget + \frac{1}{\sqrt{n}}\clip{\zcorr, \min\{\zoreg, \zipw\}, \max\{\zoreg, \zipw\}} + o_p(n^{-1/2})\,.
\end{equation}
This completes the proof.
\end{proof}

\section{Proof of main results}

\subsection{Proof of Lemma~\ref{prop:obs-if}}
\begin{proof}
We write $\shortfif$ to indicate the full-data influence function for notational simplicity. 
Given a full-data influence function $\shortfif$, by Theorem 7.2 in \citet{tsiatis2006semiparametric}, we know that the space of associated observed-data influence functions $\Lambda^\perp$ is given by 
\begin{equation}
\label{eq:class_oif}
    \Lambda^\perp = \left\{\frac{R}{\pitarget(X)} \shortfif \oplus \Lambda_2 \right\}\,,
\end{equation}
where $ \Lambda_2 = \left\{ L_2(\data) \,:\, \EE{L_2(\data)\mid \data^F} = 0 \right\}$. 

By Theorem 10.1 in \citet{tsiatis2006semiparametric}, for a fixed $\shortfif$, the optimal observed-data influence function among the class in Equation~\ref{eq:class_oif} is obtained by choosing
\begin{equation}
    L_2(\data) = - \Pi\left(\frac{R}{\pitarget(X)} \shortfif \mid \Lambda_2 \right)\,,
\end{equation}
where the operator $\Pi$ projects the element $R\shortfif/\pitarget(X)$ onto the space $\Lambda_2$ -- see Theorem 2.1 in \citet{tsiatis2006semiparametric}. 

Finally, by Theorem 10.2 in \citet{tsiatis2006semiparametric}, we know that 
\begin{equation}
    \Pi\left(\frac{R}{\pitarget(X)} \shortfif \mid \Lambda_2 \right) = \left(\frac{R - \pitarget(X)}{\pitarget(X)} \right) h_2(X) \in \Lambda_2\,,
\end{equation}
where $h_2(X) = \EE{\shortfif\mid X}$. This concludes the proof.
\end{proof}

\subsection{Proof of Proposition~\ref{prop:corr}}
\begin{proof}
    Under Assumption~\ref{ass:MAR}, we have
    \begin{equation}
        \EE{\ipw} = \EE{\frac{R Y}{\hat\pi(X)}} = \EE{\frac{\pitarget(X) \mutarget(X)}{\hat\pi(X)}}\,.
    \end{equation}
    Then, given the fact that $\hat\mu =\mutarget$, we have
    \begin{equation}
        \EE{\ipw} = \EE{\frac{\pitarget(X) \mutarget(X)}{\hat\pi(X)}} = \EE{\frac{\pitarget(X) \hat\mu(X)}{\hat\pi(X)}} \,,
    \end{equation}
    which is the expected value of the quantity $n^{-1} \sum_{i=1}^n  R_i\hat\mu(X_i)/\hat\pi(X_i)$.

    Similarly, under Assumption~\ref{ass:MAR} and $\hat\pi=\pitarget$, we have
    \begin{equation}
        \EE{\oreg} = \EE{\hat\mu(X)} = \EE{\frac{\pitarget(X)\hat\mu(X)}{\pitarget(X)}} = \EE{\frac{R \hat\mu(X)}{\hat\pi(X)}}\,,
    \end{equation}
    which is the expected value of $n^{-1} \sum_{i=1}^n  R_i\hat\mu(X_i)/\hat\pi(X_i)$. These facts imply $\EE{\dr} = \thetatarget$.
\end{proof}

\subsection{Proof of Theorem~\ref{th:consistency}}
\begin{proof}
The proof relies on Lemma \ref{lemma:lim_clip}, Proposition \ref{prop:corr}, and the property of \texttt{DR} estimators. We know that, if a nuisance is well-specified, then $\dr \pto \thetatarget$. This can be written as
\begin{equation}
     \text{plim}\left(\dr\right) = \text{plim}\left(\oreg\right) + \text{plim}\left(\ipw\right) - \text{plim}\left(\hat C\right) = \thetatarget\,,
\end{equation}
where $\hat C = n^{-1} \sum_{i=1}^n R_i \hat\mu(X_i) / \hat\pi(X_i)$ denotes the correction term. Assume that $\hat\mu = \mutarget$ (the proof for $\hat\pi=\pitarget$ is identical). Then, by Proposition \ref{prop:corr}, we have
\begin{equation}
    \text{plim}\left(\hat C\right) = \text{plim}\left(\ipw\right) \,.
\end{equation}
By Lemma \ref{lemma:lim_clip}, we also know that
\begin{equation}
    \clip{\text{plim}\left(\hat C\right)} = \text{plim}\left(\hat C\right)\,,
\end{equation}
as the limit $\text{plim}\left(\ipw\right)$ is within the boundaries of the clipping operator. This implies
\begin{equation}
     \text{plim}\left(\acc\right) = \text{plim}\left(\oreg\right) + \text{plim}\left(\ipw\right) - \text{plim}\left(\clip{\hat C}\right) = \text{plim}\left(\dr\right) = \thetatarget\,,
\end{equation}
which delivers the result. 
\end{proof}

\subsection{Proof of Theorem~\ref{th:efficiency}}

\begin{proof}
    Denote $I_n = \left[\min\{\oreg,\ipw\} -\delta_n,\max\{\oreg,\ipw\} +\delta_n  \right]$. The \ACC{} estimator differs from \texttt{DR} strictly on the event that clipping occurs, denoted as $A_n = \{\dr \notin I_n\}$. Therefore, it suffices to prove that the probability of this event vanishes, i.e.~$\PP{A_n}\to0$. 

    For $\dr$ to fall outside the interval $I_{n}$ its distance to both $\oreg$ and $\ipw$ must exceed the slack threshold $\delta_{n}$. That is, we must have $|\dr -\oreg|>\delta_{n}$ and $|\dr -\ipw |>\delta_{n}$. This gives us the intersection bounding the clipping event:
    \begin{equation}
        A_{n}\subseteq\{|\dr-\oreg|>\delta_{n}\}\cap\{|\dr-\ipw|>\delta_{n}\}\,.
    \end{equation}
    Because the event requires both distances to exceed $\delta_{n}$, the probability of clipping is bounded by the probability of exceeding the distance to just one of the estimators:
    \begin{equation}
        \PP{A_n} \leq \min \left\{ \PP{|\dr -\oreg|>\delta_{n}}, \PP{|\dr -\ipw|>\delta_{n}} \right\}\,.
    \end{equation}
    Without loss of generality, assume $\oreg$ is the base estimator with the faster (or equal) convergence rate, meaning $\gamma=\alpha$. We analyze the distance $|\dr -\oreg|$ using the triangle inequality:
    \begin{equation}
        |\dr-\oreg|\leq|\dr-\thetatarget|+|\thetatarget-\oreg|\,.
    \end{equation}
    Substituting the assumed rates yields:
    \begin{equation}
         |\dr -\oreg|=O_{\mathbbmss{P}}(n^{-1/2})+O_{\mathbbmss{P}}(n^{-\gamma})=O_{\mathbbmss{P}}(n^{-\gamma})\,.
    \end{equation}
    The expression $|\dr-\oreg|=O_{\mathbbmss{P}}(n^{-\gamma})$ implies that for any $\varepsilon>0$, there exists an $M>0$ such that for sufficiently large $n$:
    \begin{equation}
        \PP{n^{\gamma}|\dr -\oreg|>M} < \varepsilon\,.
    \end{equation}

    By assumption, the slack sequence satisfies $\delta_{n}n^{\gamma}\to\infty$. Therefore, there exists an $N$ such that for all $n>N$, $\delta_{n}n^{\gamma}>M$. Consequently:
    \begin{equation}
        \PP{|\dr -\oreg|>\delta_{n}} = \PP{n^{\gamma}|\dr -\oreg|>\delta_{n}n^{\gamma}} \leq \PP{n^{\gamma}|\dr-\oreg|>M} < \varepsilon\,.
    \end{equation}

Since $\varepsilon$ is arbitrary, $\PP{|\dr-\oreg|>\delta_{n}}\to0$. Because $\PP{|\dr - \oreg|>\delta_n}\to0$, it immediately follows that $\PP{A_{n}}\to0$. On the complement event $A_{n}^{C}$ we have $\acc =\dr$. This directly yields $\acc-\dr =o_{\mathbbmss{P}}(n^{-1/2})$, proving the theorem.
\end{proof}

\subsection{Proof of Theorem~\ref{th:safety}}
\begin{proof}
Let $\hat C = n^{-1} \sum_{i=1}^n R_i \hat\mu(X_i) / \hat\pi(X_i)$. The \ACC{} estimator is defined as
\begin{equation}
    \acc = \oreg + \ipw - \clip{\hat C}\,.
\end{equation}
The proof proceeds by first establishing that the value of $\acc$ is always bounded by the values of $\min\{\oreg, \ipw\} - \delta_n$ and $\max\{\oreg, \ipw\} + \delta_n$. We consider three exhaustive cases for the value of $\hat C$:
\begin{enumerate}
    \item If $\hat C \le \min\{\oreg, \ipw\} - \delta_n$, the clipped value is $\min\{\oreg, \ipw\} - \delta_n$. The estimator becomes:
    \begin{equation}
        \acc = \oreg + \ipw - \left( \min\{\oreg, \ipw\} - \delta_n \right) = \max\{\oreg, \ipw\} + \delta_n \,.
    \end{equation}
    \item If $\hat C \ge \max\{\oreg, \ipw\} + \delta_n$, the clipped value is $\max\{\oreg, \ipw\} + \delta_n$. The estimator becomes:
    \begin{equation}
        \acc = \oreg + \ipw - \left( \max\{\oreg, \ipw\} + \delta_n \right) = \min\{\oreg, \ipw\} - \delta_n \,.
    \end{equation}
    \item If $\min\{\oreg, \ipw\} -\delta_n < \hat C < \max\{\oreg, \ipw\} +\delta_n$, the clip has no effect. The estimator is $\acc = \oreg + \ipw - \hat C$. As $\hat C$ is between $\min\{\oreg, \ipw\} - \delta_n$ and $\max\{\oreg, \ipw\} + \delta_n$, the value of $\oreg + \ipw - \hat C$ must also lie in the interval $\left[\min\{\oreg, \ipw\} - \delta_n , \max\{\oreg, \ipw\} + \delta_n\right]$.
\end{enumerate}
In all possible cases, the estimator value is guaranteed to be in the closed interval defined by its components:
\begin{equation}
    \min\{\oreg, \ipw\} -\delta_n \le \acc \le \max\{\oreg, \ipw\} + \delta_n \,.
\end{equation}
This establishes the key \enquote{interval property} of the estimator. This implies that the \ACC{} estimator is a convex combination of the \texttt{OR} and \texttt{IPW} estimators (up to $\delta_n$), that is
\begin{equation}
\label{eq:combination}
\begin{split}
    \acc &= \lambda \left(\min\{\oreg, \ipw\} -\delta_n\right) + \left( 1 - \lambda\right) \left(\max\{\oreg, \ipw\} + \delta_n\right) \\
    &= \lambda \min\{\oreg, \ipw\} + \left( 1 - \lambda\right) \max\{\oreg, \ipw\} + (1-2\lambda) \delta_n\,,
\end{split}
\end{equation}
where
\begin{equation}
    \lambda = \frac{\hat C - \left( \min \left\{\oreg, \ipw \right\} -  \delta_n \right)}{\left(\max \left\{\oreg, \ipw \right\} + \delta_n\right) - \left(\min \left\{\oreg, \ipw \right\} - \delta_n\right)} \in [0,1]\,.
\end{equation}
By defining $\Tilde{\lambda} = \lambda\onea{\oreg \leq \ipw} + (1-\lambda)\onea{\oreg > \ipw} \in \{0,1\}$, we can write Equation~\ref{eq:combination} as
\begin{equation}
     \acc =\Tilde\lambda \oreg  + \left( 1 - \Tilde\lambda\right) \ipw + (1-2\Tilde{\lambda})\delta_n\,.
\end{equation}

Next, we relate this property to the estimation error. By subtracting $\thetatarget$ on both sides of the previous Equation, taking absolute values, and by triangle inequality, we get
\begin{equation}
    |\acc - \thetatarget| \leq \hat\lambda |\oreg - \thetatarget| + \left( 1 -\hat\lambda\right) |\ipw - \thetatarget| + |1-2\Tilde{\lambda}|\delta_n\,.
\end{equation}
Finally, given $\Tilde{\lambda}\in[0,1]$, $|1-2\Tilde{\lambda}|\leq1$. Therefore:
\begin{equation}
    |\acc - \thetatarget| \leq \hat\lambda |\oreg - \thetatarget| + \left( 1 -\hat\lambda\right) |\ipw - \thetatarget| + \delta_n\,.
\end{equation}

We can also easily derive the maximum bound. For any point $x$ in an interval $[a, b]$ and any external point $c$, the distance $|x-c|$ is maximized at one of the endpoints, $a$ or $b$. Therefore, the absolute error of our estimator is bounded by the maximum of the absolute errors of the \texttt{OR} and \texttt{IPW} estimators for any given sample:
\begin{equation}
    |\acc - \thetatarget| \le \max\{|\oreg - \thetatarget|, |\ipw - \thetatarget|\} + \delta_n \,.
\end{equation}
This inequality shows that the error of the \ACC{} estimator is always less than or equal to the maximum of the errors of its components. 
\end{proof}

\subsection{Proof of Theorem~\ref{th:oracle}}
\begin{proof}
    We show the two bounds independently, starting with the first. Denote $L_n = \min\{\oreg,\ipw\} -\delta_n$, $U_n = \max\{\oreg,\ipw\} +\delta_n$, and $I_n = [L_n, U_n ]$. Partition the sample space into four mutually exclusive and exhaustive events:
    \begin{equation}
        \begin{split}
            E_{in} &= \{\thetatarget \in I_n\}, \\
            E_{out,mid} &= \{\thetatarget \notin I_n, \dr \in I_n\}, \\
            E_{out,opp} &= \{\thetatarget < L_n, \dr > U_n\} \cup \{\thetatarget > U_n, \dr < L_n\}, \\
            E_{out,same} &= \{\thetatarget < L_n, \dr < L_n\} \cup \{\thetatarget > U_n, \dr > U_n\}\,.
        \end{split}
    \end{equation}
    Aggregate the events in $E = E_{in} \cup E_{out,mid} \cup E_{out,opp}$ and $E^C = E_{out,same}$. 

    For each $j$, the elementary inequality $a \le b + (a - b)_+$ applied with $a = \ell^\star$ and $b = \ell(\hat\theta_j)$ gives $\ell^\star \le \ell(\hat\theta_j)+ (\ell^\star - \ell(\hat\theta_j))_+$ pointwise. Combining with Lemma~\ref{lem:pointwise}, we get
    \begin{equation}
        \ell(\acc) \leq \ell^\star \leq \ell(\hat\theta_j) + (\ell^\star - \ell(\hat\theta_j))_+\,.
    \end{equation}
Taking expectations gives $\risk{\acc} \leq \risk{\hat\theta_j} + \Pi_j$ -- the first bound.

We now move to the second bound. By Lemma~\ref{lem:base-Dn}, for each $j \in \{\texttt{OR}, \texttt{IPW}\}$,
\begin{equation}
    |\acc - \thetatarget| \leq |\hat\theta_j - \thetatarget| + |\oreg - \ipw| + \delta_n\,.
\end{equation}
Square both sides and take expectations, then square root. By Minkowski's inequality,
\begin{equation}
    \sqrt{\EE{(\acc- \thetatarget)^2}} \leq \sqrt{\EE{\left(|\hat\theta_j - \thetatarget| + |\oreg - \ipw| + \delta_n\right)^2}} \leq \sqrt{\EE{(\hat\theta_j - \thetatarget)^2}} + \sqrt{\EE{D^2_n}}\,,
\end{equation}
where $D_n = |\oreg - \ipw| + \delta_n$. Since $\EE{(\acc- \thetatarget)^2} = \risk{\acc}$ and similarly for $\hat\theta_j$, and $\EE{D^2_n} = V_n$, this reads
\begin{equation}
    \sqrt{\risk{\acc}} \leq \sqrt{\risk{\hat\theta_j}} + \sqrt{V_n}\,.
\end{equation}
Minimizing the right side over $j \in \{\texttt{OR}, \texttt{IPW}\}$, we get $\sqrt{\risk{\acc}} \leq \sqrt{\min\{\risk{\oreg}, \risk{\ipw}\}} + \sqrt{V_n}$.

Now, apply Young's inequality $(a+b)^2 \le (1+\eta)a^2 + (1 + 1/\eta)b^2$ with $a = \sqrt{\min\{\risk{\oreg}, \risk{\ipw}\}}$ and $b = \sqrt{V_n}$:
\begin{equation}
    \risk{\acc} \leq (1+\eta) \min\{\risk{\oreg}, \risk{\ipw}\} + \left( 1 + \frac{1}{\eta} \right) V_n\,. 
\end{equation}
The bound $V_n \leq 2 \EE{(\oreg - \ipw)^2} + 2\delta_n^2$ follows from $(x+y)^2 \le 2x^2 + 2y^2$ applied pointwise to $D_n = |\oreg - \ipw| + \delta_n$, then taking expectations.

Each of the bounds holds simultaneously and unconditionally. Since $\risk{\acc}$ is a fixed quantity, it is bounded by the minimum of all valid upper bounds.
\end{proof}

\section{Clipping with $\delta_n=0$}
\label{sec:delta0}

In Section~\ref{sec:acc}, we established the default recommendation for the slack sequence $\delta_n$, which ensures that adaptive correction clipping remains asymptotically inactive under correct nuisance model specification. While this choice prioritizes semiparametric efficiency, practitioners may instead wish to prioritize safety above all else. In this section, we analyze the special case where $\delta_n=0$.

Setting the slack term to zero enforces the strongest possible safety guarantee: the resulting \ACC{} estimate is strictly required to be a convex combination of its constituent Outcome Regression (\texttt{OR}) and Inverse Probability Weighting (\texttt{IPW}) estimators. This ``interval property'' ensures that the estimator can never stray outside the bounds defined by the simpler nuisance models, providing maximal protection against the extreme instabilities of double fragility.

However, this absolute safety comes with a theoretical trade-off. When $\delta_n=0$, the non-linear clipping operator may remain active even in large samples where the standard \texttt{DR} estimator is well-behaved. This leads to a non-Gaussian limiting distribution, which invalidates standard inferential procedures based on asymptotic normality. Below, we formally characterize the safety properties of this case and propose a parametric bootstrap procedure to conduct valid inference despite the non-normal asymptotic distribution.

\subsection{Safety}
First, here we show the stronger safety property that characterizes \ACC{} when $\delta_n=0$.

\begin{theorem}[Safety]
\label{th:safe}
    Assume the identifiability conditions described in Assumption~\ref{ass:MAR}. Define 
    \begin{equation}
        \hat\lambda = \frac{\oreg - \clip{\frac{1}{n}\sum_{i=1}^n \frac{R_i \hat\mu(X_i)}{\hat\pi(X_i)}}}{\oreg - \ipw} \in [0,1]\,.
    \end{equation}
    Then, we have
    \begin{equation}
    \begin{split}
        |\acc - \thetatarget| &\leq \hat\lambda |\oreg - \thetatarget| + \left( 1 -\hat\lambda\right) |\ipw - \thetatarget| \\
        &\leq \max\{|\oreg - \thetatarget|, |\ipw - \thetatarget|\} \,.
    \end{split}
    \end{equation}
\end{theorem}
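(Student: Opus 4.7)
The plan is to show that the \ACC{} estimator is, by construction, a convex combination of $\oreg$ and $\ipw$ with weights $\hat\lambda$ and $1-\hat\lambda$, after which the two claimed bounds follow from the triangle inequality and the elementary fact that a convex combination of nonnegative numbers is bounded by their maximum.

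The first step is purely algebraic. Rearranging the definition of $\hat\lambda$ gives
\begin{equation}
\clip{\tfrac{1}{n}\sum_{i=1}^n \tfrac{R_i\hat\mu(X_i)}{\hat\pi(X_i)}} = (1-\hat\lambda)\,\oreg + \hat\lambda\,\ipw,
\end{equation}
at least whenever $\oreg \ne \ipw$. Substituting this identity into $\acc = \oreg + \ipw - \clip{\cdot}$ yields
\begin{equation}
\acc = \hat\lambda\,\oreg + (1-\hat\lambda)\,\ipw,
\end{equation}
so that $\acc - \thetatarget = \hat\lambda(\oreg - \thetatarget) + (1-\hat\lambda)(\ipw - \thetatarget)$.

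The second step verifies that $\hat\lambda \in [0,1]$; this is the only part that requires case analysis, and is the step I expect to need the most care. I would split on the three branches of the clipping operator. When the correction term lies strictly between $\min\{\oreg,\ipw\}$ and $\max\{\oreg,\ipw\}$, the numerator and denominator of $\hat\lambda$ have the same sign and $|\oreg - \clip{\cdot}| \le |\oreg - \ipw|$, so $\hat\lambda \in [0,1]$. In the two saturating branches, direct substitution shows $\hat\lambda \in \{0,1\}$ (which branch yields which value depends on the sign of $\oreg - \ipw$, but both cases are immediate). The degenerate edge case $\oreg = \ipw$ must be handled separately: the clipping forces $\clip{\cdot} = \oreg = \ipw$, so $\acc = \oreg = \ipw$ and the two bounds hold trivially with any convention for $\hat\lambda$.

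The final step is to combine these two pieces. Applying the triangle inequality to the displayed identity for $\acc - \thetatarget$, together with $\hat\lambda,\,1-\hat\lambda \geq 0$, gives
\begin{equation}
|\acc - \thetatarget| \leq \hat\lambda\,|\oreg - \thetatarget| + (1-\hat\lambda)\,|\ipw - \thetatarget|,
\end{equation}
which is the first inequality of the theorem. The second inequality then follows because the right-hand side is a convex combination of two nonnegative scalars and is therefore dominated by their maximum. No probabilistic tools are needed: the result is a deterministic, sample-path statement that holds for every realization of the data.
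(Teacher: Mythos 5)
Your proof is correct and follows essentially the same route as the paper's: both establish that $\acc$ is the convex combination $\hat\lambda\,\oreg + (1-\hat\lambda)\,\ipw$ with $\hat\lambda\in[0,1]$ (you via direct rearrangement of the definition of $\hat\lambda$, the paper via a three-case analysis of the clip showing the interval property first), then apply the triangle inequality and bound the convex combination by the maximum. Your explicit treatment of the degenerate case $\oreg=\ipw$ is a small point of added care that the paper's proof glosses over.
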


\begin{proof}
Let $\hat C = n^{-1} \sum_{i=1}^n R_i \hat\mu(X_i) / \hat\pi(X_i)$. The \ACC{} estimator is defined as
\begin{equation}
    \acc = \oreg + \ipw - \clip{\hat C}\,.
\end{equation}
The proof proceeds by first establishing that the value of $\acc$ is always bounded by the values of $\oreg$ and $\ipw$. We consider three exhaustive cases for the value of $\hat C$:
\begin{enumerate}
    \item If $\hat C \le \min\{\oreg, \ipw\}$, the clipped value is $\min\{\oreg, \ipw\}$. The estimator becomes:
    \begin{equation}
        \acc = \oreg + \ipw - \min\{\oreg, \ipw\} = \max\{\oreg, \ipw\}\,.
    \end{equation}
    \item If $\hat C \ge \max\{\oreg, \ipw\}$, the clipped value is $\max\{\oreg, \ipw\}$. The estimator becomes:
    \begin{equation}
        \acc = \oreg + \ipw - \max\{\oreg, \ipw\} = \min\{\oreg, \ipw\}\,.
    \end{equation}
    \item If $\min\{\oreg, \ipw\} < \hat C < \max\{\oreg, \ipw\}$, the clip has no effect. The estimator is $\acc = \oreg + \ipw - \hat C$. As $\hat C$ is between $\oreg$ and $\ipw$, the value of $\oreg + \ipw - \hat C$ must also lie in the interval $[\oreg, \ipw]$.
\end{enumerate}
In all possible cases, the estimator value is guaranteed to be in the closed interval defined by its components:
\begin{equation}
    \min\{\oreg, \ipw\} \le \acc \le \max\{\oreg, \ipw\}\,.
\end{equation}
This establishes the key \enquote{interval property} of the estimator. This implies that the \ACC{} estimator is a convex combination of the \texttt{OR} and \texttt{IPW} estimators, that is
\begin{equation}
    \acc = \hat\lambda \oreg + \left( 1 -\hat\lambda\right) \ipw\,,
\end{equation}
where $\hat \lambda = (\oreg - \hat C)/(\oreg - \ipw) \in [0,1]$. 

Next, we relate this property to the estimation error. By subtracting $\thetatarget$ on both sides of the previous Equation, taking absolute values, and by triangle inequality, we get
\begin{equation}
    |\acc - \thetatarget| \le \hat\lambda |\oreg - \thetatarget| + \left( 1 -\hat\lambda\right) |\ipw - \thetatarget|\,.
\end{equation}
If $\hat\lambda$ is an independent estimate with respect to $\oreg$ and $\ipw$, then we can compute the bias of \ACC{}:
\begin{equation}
     \EE{|\acc - \thetatarget|} \le \EE{\hat\lambda} \EE{\left|\mutarget - \hat\mu\right|} + \left( 1 -\EE{\hat\lambda}\right) \EE{\left|\frac{\pitarget}{\hat\pi} - 1 \right| \left|\mutarget\right|}\,.
\end{equation}

We can also easily derive the maximum bound. For any point $x$ in an interval $[a, b]$ and any external point $c$, the distance $|x-c|$ is maximized at one of the endpoints, $a$ or $b$. Therefore, the absolute error of our estimator is bounded by the maximum of the absolute errors of the \texttt{OR} and \texttt{IPW} estimators for any given sample:
\begin{equation}
    |\acc - \thetatarget| \le \max\{|\oreg - \thetatarget|, |\ipw - \thetatarget|\}\,.
\end{equation}
This inequality shows that the in-sample error of the \ACC{} estimator is always less than or equal to the maximum of the in-sample errors of its components. Computing the absolute bias of the \texttt{OR} and \texttt{IPW} estimators, we arrive at the final result:
\begin{equation}
        \EE{\left|\acc - \thetatarget \right|} \leq \max\left\{\EE{\left|\mutarget - \hat\mu\right|}, \EE{\left|\frac{\pitarget}{\hat\pi} - 1 \right|\left|\mutarget\right|} \right\}\,.
\end{equation}
\end{proof}

\begin{remark}
    The previous result can be readily extended to a result involving the absolute bias of the estimator, by taking expectations. In this case, to simplify the analysis, it is enough to estimate $\hat\lambda$ on a separate independent sample with respect to $\oreg$ and $\ipw$, so that one gets
    \begin{equation}
    \begin{split}
        \EE{\left|\acc - \thetatarget \right|} &\leq \EE{\hat\lambda} \EE{\left|\hat\mu - \mutarget\right|} + \left(1-\EE{\hat\lambda}\right) \EE{\left|1 - \frac{\pitarget}{\hat\pi} \right| \left|\mutarget\right|} \\
        &\leq \max\left\{\EE{\left|\hat\mu - \mutarget\right|}, \EE{\left|1 - \frac{\pitarget}{\hat\pi} \right| \left|\mutarget\right|} \right\}\,.        
    \end{split}
    \end{equation}
\end{remark}

\subsection{Inference through parametric bootstrap}
The safety property of the \ACC{} estimator is achieved through the use of a non-linear clipping operator. A direct, unfortunate consequence of this is that, when $\delta_n=0$, the estimator asymptotic distribution is not normal, even when the nuisance models are correctly specified. Instead, the limiting distribution is a non-linear transformation of jointly normal random variables, which invalidates standard inferential procedures that rely on a normal approximation. However, we propose a \textit{parametric bootstrap} procedure to construct asymptotically valid confidence intervals. This method does not rely on a normal approximation; instead, it works by simulating the true, non-standard asymptotic distribution of our estimator and then using its empirical quantiles to form an interval.

Before presenting the next Theorem, which characterizes the asymptotic distribution of our \ACC{} estimator when $\delta_n=0$ and both nuisance models are well-specified, we introduce some additional notation. We define the scaled errors of \texttt{OR}, \texttt{IPW} and the nonclipped correction term, along with their limits in distribution, as:
\begin{equation}
\label{eq:asymp_exp}
    \begin{split}
        \sqrt{n}(\oreg - \thetatarget) &\dto \zoreg\,, \\
        \sqrt{n}(\ipw - \thetatarget) &\dto \zipw\,, \\
        \sqrt{n}\left(\frac{1}{n} \sum_{i=1}^n \frac{R_i\hat\mu(X_i)}{\hat\pi(X_i)} - \thetatarget \right) &\dto \zcorr\,,
    \end{split}
\end{equation}
where the vector $(\zoreg, \zipw, \zcorr)$ follows a multivariate normal distribution with mean zero and an unknown covariance matrix $\Sigma$.

\begin{theorem}[Asymptotic distribution of \ACC]
\label{th:asymp}
Assume the identifiability conditions described in Assumption~\ref{ass:MAR}. Assume also that the nuisance models, $\hat{\mu}$ and $\hat{\pi}$, are well-specified. Then
\begin{equation}
    \sqrt{n}(\acc - \thetatarget) \dto \zoreg + \zipw - \clip{\zcorr}\,.
\end{equation}
\end{theorem}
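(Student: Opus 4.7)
The plan is to reduce the claim to a continuous mapping theorem argument on top of joint asymptotic normality for the three scaled estimator components. Setting $T_n = n^{-1}\sum_i R_i\hat\mu(X_i)/\hat\pi(X_i)$ and centering the identity $\acc = \oreg + \ipw - \clip{T_n}$ at $\thetatarget$, I would first obtain the algebraic decomposition
\begin{equation*}
\sqrt{n}(\acc - \thetatarget) = \sqrt{n}(\oreg - \thetatarget) + \sqrt{n}(\ipw - \thetatarget) - \sqrt{n}\bigl(\clip{T_n} - \thetatarget\bigr).
\end{equation*}
The key observation is that centering by $\thetatarget$ passes through the $\min$ and $\max$ operators that define the three branches of the clipping operator, so a case analysis on the definition of $\clip{\cdot}$ delivers
\begin{equation*}
\sqrt{n}\bigl(\clip{T_n} - \thetatarget\bigr) = \max\bigl\{\min\{A_n, B_n\},\, \min\{C_n,\, \max\{A_n, B_n\}\}\bigr\},
\end{equation*}
where $A_n = \sqrt{n}(\oreg - \thetatarget)$, $B_n = \sqrt{n}(\ipw - \thetatarget)$, and $C_n = \sqrt{n}(T_n - \thetatarget)$.

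The second step is to establish joint convergence of $(A_n, B_n, C_n)$ to $(\zoreg, \zipw, \zcorr)$. Under the pre-trained nuisance setup adopted in the statement, and with both $\hat\mu$ and $\hat\pi$ well-specified, each of $\oreg$, $\ipw$, and $T_n$ is a sample average of i.i.d.\ functions of $(X_i, R_i, R_iY_i)$ with common mean $\thetatarget$; the fact that $\EE{T_n} = \thetatarget$ in this regime is precisely Proposition \ref{prop:corr}. Provided these summands have finite second moments, the multivariate Central Limit Theorem yields joint convergence to a mean-zero Gaussian vector with covariance $\Sigma$, matching Equation \ref{eq:asymp_exp}.

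Finally, the map $\Phi(a,b,c) = a + b - \max\bigl\{\min\{a,b\},\, \min\{c,\, \max\{a,b\}\}\bigr\}$ is continuous on $\RR^3$ as a composition of addition, $\min$, and $\max$, so the continuous mapping theorem applied to the joint convergence above delivers $\sqrt{n}(\acc - \thetatarget) \dto \zoreg + \zipw - \clip{\zcorr}$, with the understanding that $\clip{\cdot}$ in the limit clips its argument to the random interval $[\min\{\zoreg, \zipw\}, \max\{\zoreg, \zipw\}]$. The main obstacle is the joint rather than merely marginal asymptotic normality: the three summands $\hat\mu(X_i)$, $R_iY_i/\hat\pi(X_i)$, and $R_i\hat\mu(X_i)/\hat\pi(X_i)$ share overlapping functional form and data, so writing down their joint covariance is the most delicate bookkeeping step. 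This is essentially routine in the pre-trained setup at hand, but extending the result to cross-fitted nuisances would require controlling a joint empirical-process remainder uniformly across the three components, which is the only genuinely non-trivial technical ingredient one would encounter in a generalization.
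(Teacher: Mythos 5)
Your proposal is correct and follows essentially the same route as the paper: both arguments rest on the joint asymptotic normality of the scaled \texttt{OR}, \texttt{IPW}, and correction components together with the continuity and affine equivariance of the clip operator (the paper packages this as Lemma~\ref{lem:aec}, an asymptotic expansion of the clipped correction term, before substituting into the estimator). Your version is marginally cleaner in that the translation/scale equivariance of $\min$ and $\max$ gives the \emph{exact} finite-sample identity $\sqrt{n}(\acc - \thetatarget) = \Phi(A_n, B_n, C_n)$ with $\Phi$ continuous, so the continuous mapping theorem applies directly without tracking $o_p(n^{-1/2})$ remainders inside the clip.
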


\begin{proof}
Let $\hat C = n^{-1} \sum_{i=1}^n R_i \hat\mu(X_i) / \hat\pi(X_i)$. Our goal is to find the limiting distribution of $\sqrt{n}(\acc - \thetatarget)$. We start by expressing the component estimators using an asymptotic expansion based on their limiting random variables:
\begin{equation}
    \begin{split}
        \oreg &= \thetatarget + \frac{\zoreg}{\sqrt{n}} + o_p(n^{-1/2})\,, \\
        \ipw &= \thetatarget + \frac{\zipw}{\sqrt{n}} + o_p(n^{-1/2})\,, \\
        \hat C &= \thetatarget + \frac{\zcorr}{\sqrt{n}} + o_p(n^{-1/2})\,.
    \end{split}
\end{equation}
The clipping bounds can be expanded similarly:
\begin{equation}
    \begin{split}
         L_n &= \min\{\oreg, \ipw\} = \thetatarget + \frac{1}{\sqrt{n}}\min\{\zoreg, \zipw\} + o_p(n^{-1/2})\,, \\
         U_n &= \max\{\oreg, \ipw\} = \thetatarget + \frac{1}{\sqrt{n}}\max\{\zoreg, \zipw\} + o_p(n^{-1/2})\,.
    \end{split}
\end{equation}
Now we analyze the clipped term. By Lemma \ref{lem:aec} we have:
\begin{equation}
    \clip{\hat C, L_n, U_n} = \thetatarget + \frac{1}{\sqrt{n}}\clip{\zcorr, \min\{\zoreg, \zipw\}, \max\{\zoreg, \zipw\}} + o_p(n^{-1/2})
\end{equation}
Finally, we substitute these expansions into the expression for the scaled error of the final estimator:
\begin{equation}
    \begin{split}
        \sqrt{n}(\acc - \thetatarget) &= \sqrt{n}\left( \oreg + \ipw - \clip{\hat C, L_n, U_n} - \thetatarget \right) \\
        &= \sqrt{n}\left( \left(\thetatarget + \frac{\zoreg}{\sqrt{n}}\right) + \left(\thetatarget + \frac{\zipw}{\sqrt{n}}\right) - \left(\thetatarget + \frac{\clip{\zcorr}}{\sqrt{n}}\right) - \thetatarget \right) + o_p(1) \\
        &= \sqrt{n}\left( \frac{\zoreg + \zipw -\clip{\zcorr}}{\sqrt{n}} \right) + o_p(1) \\
        &= \zoreg + \zipw -\clip{\zcorr, \min\{\zoreg, \zipw\}, \max\{\zoreg, \zipw\}} + o_p(1)\,.
    \end{split}
\end{equation}
As $n \to \infty$, the $o_p(1)$ term vanishes, which completes the proof.
\end{proof}

\begin{remark}
The Theorem recovers the consistency result of Theorem~\ref{th:consistency}, as the limiting distribution is centered correctly. However, the limiting distribution is not normal. In fact, the random variable $W = \zoreg + \zipw - \clip{\zcorr}$ is a non-linear transformation of jointly normal variables, and thus not normal. This is the reason why standard inferential procedures based on the normal approximation are not strictly valid for our estimator. Despite this, 
in our simulations
we find that the difference between our limit $W$ and a Gaussian distribution with proper variance is 
negligible (see Supplementary Figure \ref{fig:W_example} for an example). 
\end{remark}

\begin{algorithm}[t]
\caption{Parametric bootstrap}
\label{alg:pseudocode}
\begin{algorithmic}[1]
\Require $\alpha$ (confidence level), $B$ (number of bootstrap samples)
\Ensure Asymptotically valid confidence intervals for $\thetatarget$, i.e.~$[\acc -  q_{1-\alpha/2}/\sqrt{n}, \acc - q_{\alpha/2}/\sqrt{n} ]$
\State Compute estimators $\oreg$, $\ipw$, and $\hat C = \frac{1}{n} \sum_{i=1}^n \frac{R_i\hat\mu(X_i)}{\hat\pi(X_i)}$ 
\For{$i\leftarrow1\text{ to } n$}
\State Estimate empirical influence functions $\influence{\data_i; \oreg; \hat\mu}$,  $\influence{\data_i; \ipw; \hat\pi}$, and $\influence{\data_i; \hat C; \hat\mu; \hat\pi}$ 
\EndFor
\State Estimate covariance matrix, where each element is given by $\hat{\Sigma}_{jk} = \frac{1}{n} \sum_{i=1}^{n} \influence{\data_i;j} \influence{\data_i;k}$
\For{$b\leftarrow1\text{ to } B$}
\State Sample $\left(\zoreg^b,\zipw^b,\zcorr^b\right) \sim \Normal{0}{\hat\Sigma}$ and compute $W^b = \zoreg^b + \zipw^b - \clip{\zcorr^b}$
\EndFor
\State Find the empirical $\alpha/2$ and $1-\alpha/2$ quantiles of $\{W^b\}_{b=1}^B$, denoted $q_{\alpha/2}$ and $q_{1-\alpha/2}$
\end{algorithmic}
\end{algorithm}

Based on Theorem \ref{th:asymp}, we construct an asymptotically valid confidence interval by simulating the distribution of $W$ to
calculate its quantiles
(see Algorithm~\ref{alg:pseudocode}). 
This approach is, in essence, 
a parametric bootstrap. The procedure begins by computing the empirical influence functions for each of the three core components: the \texttt{OR} estimator, the \texttt{IPW} estimator, and the correction term. These influence functions, which capture the contribution of each observation to the variance, are then used to compute a consistent estimate, $\hat\Sigma$, of the joint asymptotic covariance matrix of the three components. Then, a large number ($B$) of random vectors are drawn from a multivariate normal distribution, $\Normal{0}{\hat\Sigma}$, to simulate the joint asymptotic behavior of the unclipped components, that is $\left(\zoreg^b,\zipw^b,\zcorr^b\right) \sim \Normal{0}{\hat\Sigma}$. For each $b=1,\dots,B$, we then compute $W^b = \zoreg^b + \zipw^b - \clip{\zcorr^b}$. This transforms the sample of jointly normal variables into a large empirical sample from the asymptotic distribution of the final estimator. From this simulated sample, the empirical $\alpha/2$ and $1-\alpha/2$ quantiles, denoted $q_{\alpha/2}$ and $q_{1-\alpha/2}$, are calculated and used to construct the asymptotically valid confidence interval around the point estimate, that is
\begin{equation}
    \left[\acc - \frac{q_{1-\alpha/2}}{\sqrt{n}}\,, \acc - \frac{q_{\alpha/2}}{\sqrt{n}} \right]\,.
\end{equation}

\newpage
\section{Additional simulation results}

\begin{figure}[ht]
    \centering
    \includegraphics[width=\linewidth]{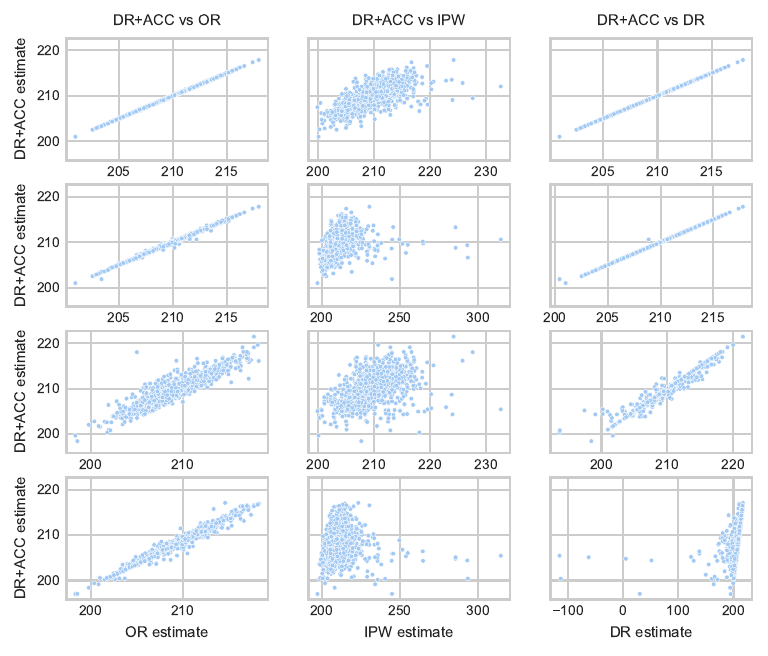}
    \caption{Scatterplots comparing the performance of the \ACC{} estimator (y-axis) against the \texttt{OR}, \texttt{IPW}, and standard \texttt{DR} estimators (x-axes) for a sample size of $n=200$. Each row corresponds to one of the four nuisance model specification scenarios (from top to bottom: correct $\hat\mu$/correct $\hat\pi$; correct $\hat\mu$/incorrect $\hat\pi$; incorrect $\hat\mu$/correct $\hat\pi$; incorrect $\hat\mu$/incorrect $\hat\pi$. The top three rows demonstrate that when at least one nuisance model is correct, the \ACC{} estimator closely tracks the standard \texttt{DR} estimator, which in turn tracks the well-specified estimator between \texttt{OR} and \texttt{IPW}, confirming its consistency. The bottom row illustrates the critical case of complete misspecification. The rightmost plot provides a direct visualization of the adaptive correction clipping mechanism: the \ACC{} estimate follows the standard \texttt{DR} when the latter is stable but remains bounded when the standard \texttt{DR} produces extreme, unstable estimates. This empirically demonstrates how the safety property of the \ACC{} estimator protects against the double fragility of the standard \texttt{DR} estimator. Full details of the simulation scenario are provided in Section \ref{sec:sim}.}
    \label{fig:scatter200}
\end{figure}

\begin{figure}[ht]
    \centering
    \includegraphics[width=\linewidth]{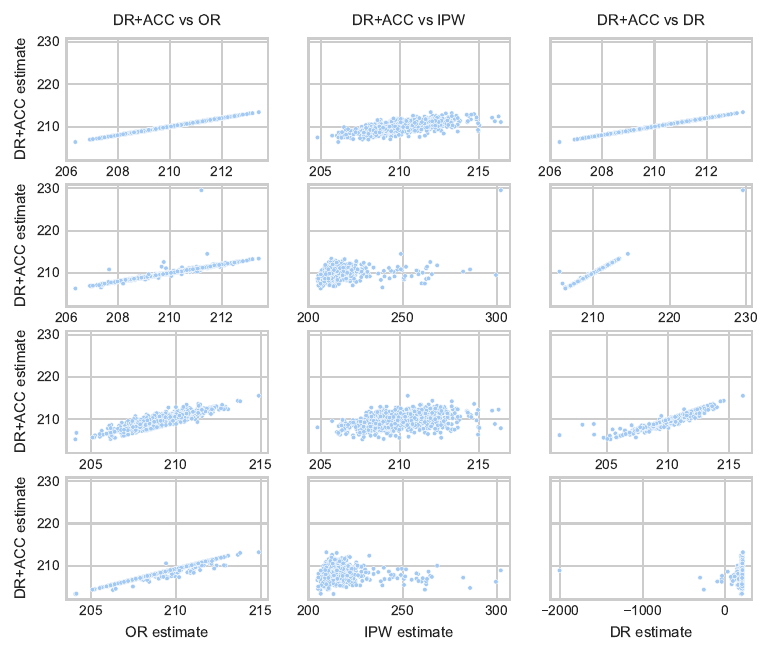}
    \caption{Scatterplots comparing the performance of the \ACC{} estimator (y-axis) against the \texttt{OR}, \texttt{IPW}, and standard \texttt{DR} estimators (x-axes) for a sample size of $n=1000$. Each row corresponds to one of the four nuisance model specification scenarios (from top to bottom: correct $\hat\mu$/correct $\hat\pi$; correct $\hat\mu$/incorrect $\hat\pi$; incorrect $\hat\mu$/correct $\hat\pi$; incorrect $\hat\mu$/incorrect $\hat\pi$. The top three rows demonstrate that when at least one nuisance model is correct, the \ACC{} estimator closely tracks the standard \texttt{DR} estimator, which in turn tracks the well-specified estimator between \texttt{OR} and \texttt{IPW}, confirming its consistency. The bottom row illustrates the critical case of complete misspecification. The rightmost plot provides a direct visualization of the adaptive correction clipping mechanism: the \ACC{} estimate follows the standard \texttt{DR} when the latter is stable but remains bounded when the standard \texttt{DR} produces extreme, unstable estimates. This empirically demonstrates how the safety property of the \ACC{} estimator protects against the double fragility of the standard \texttt{DR} estimator. Full details of the simulation scenario are provided in Section \ref{sec:sim}.}
    \label{fig:scatter1000}
\end{figure}

\begin{table}[ht]
\centering
\caption{Empirical coverage and confidence interval (CI) widths for a sample size of $n=100$, grouped by nuisance model specification. The nominal coverage level is 95\%.}
\label{tab:coverage_results_n100}
\begin{tabular}{llrr}
\toprule
\textbf{Scenario} & \textbf{Estimator} & \textbf{Coverage} & \textbf{CI Width} \\
\midrule
\multirow{4}{*}{Correct $\hat\mu$, Correct $\hat\pi$} & \texttt{OR} & 0.946 & 14.14 \\
 & \texttt{IPW} & 1.000 & 136.93 \\
 & \texttt{DR} & 0.950 & 14.16 \\
 & \ACC{} & 0.950 & 14.16 \\
\midrule
\multirow{4}{*}{Correct $\hat\mu$, Incorrect $\hat\pi$} & \texttt{OR} & 0.946 & 14.14 \\
 & \texttt{IPW} & 1.000 & 175.01 \\
 & \texttt{DR} & 0.949 & 14.18 \\
 & \ACC{} & 0.949 & 14.18 \\
\midrule
\multirow{4}{*}{Incorrect $\hat\mu$, Correct $\hat\pi$} & \texttt{OR} & 0.878 & 14.82 \\
 & \texttt{IPW} & 1.000 & 136.93 \\
 & \texttt{DR} & 0.957 & 19.20 \\
 & \ACC{} & 0.957 & 19.20 \\
\midrule
\multirow{4}{*}{Incorrect $\hat\mu$, Incorrect $\hat\pi$} & \texttt{OR} & 0.878 & 14.82 \\
 & \texttt{IPW} & 1.000 & 175.01 \\
 & \texttt{DR} & 0.928 & 23.20 \\
 & \ACC{} & 0.937 & 23.20 \\
\bottomrule
\end{tabular}
\end{table}

\begin{figure}[ht]
    \centering
    \includegraphics[width=\linewidth]{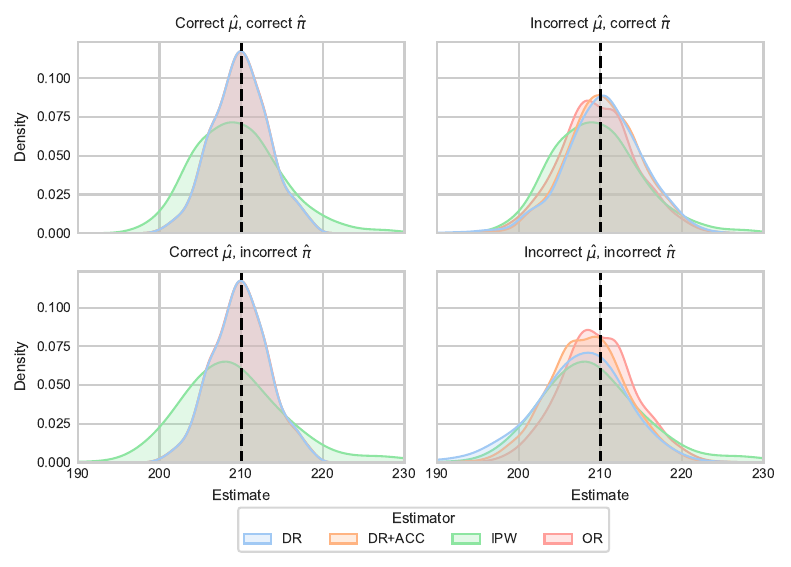}
    \caption{Sampling distributions for the Outcome Regression (\texttt{OR}), Inverse Probability Weighting (\texttt{IPW}), Doubly Robust (\texttt{DR}), and our proposed \ACC{} estimators from 1000 simulations with a sample size of $n=100$. The true parameter value is 210. The four panels show the estimators' performance under all combinations of correct and incorrect nuisance model specifications. The top row and bottom-left panel demonstrate the \textit{asymptotic hard thresholding} property. When at least one nuisance model is correct, both \texttt{DR} and \ACC{} align with the correct simpler estimator. The bottom-right panel illustrates \textit{double fragility}. When both nuisance models are wrong, the bias of the standard \texttt{DR} estimator can be worse than that of either the \texttt{OR} or \texttt{IPW} estimators. In this challenging scenario, our proposed \ACC{} estimator is shown to be \textit{safe}, providing a much more stable and accurate estimate than the standard \texttt{DR} estimator. Full details of the simulation scenario are provided in Section \ref{sec:sim}.}
    \label{fig:intro_plot_n100}
\end{figure}

\begin{figure}[ht]
    \centering
    \includegraphics[width=\linewidth]{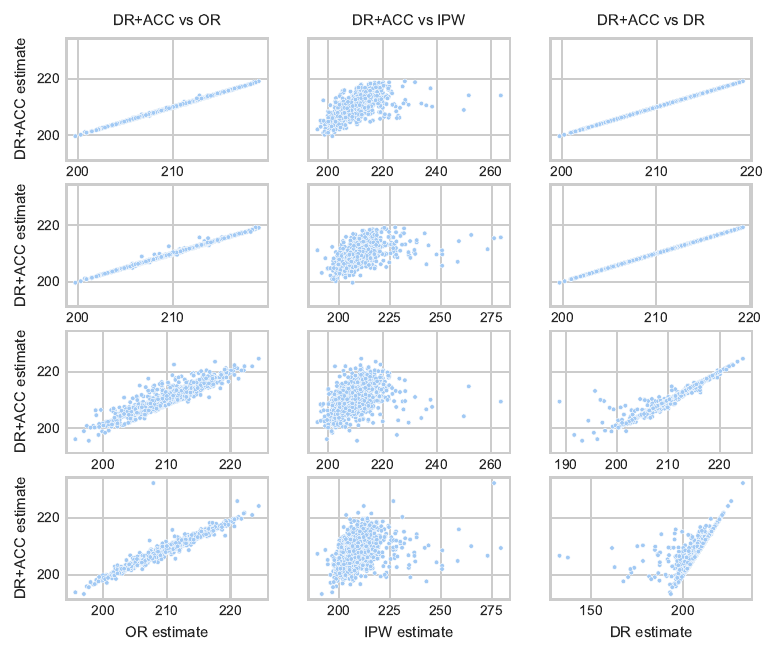}
    \caption{Scatterplots comparing the performance of the \ACC{} estimator (y-axis) against the \texttt{OR}, \texttt{IPW}, and standard \texttt{DR} estimators (x-axes) for a sample size of $n=100$. Each row corresponds to one of the four nuisance model specification scenarios (from top to bottom: correct $\hat\mu$/correct $\hat\pi$; correct $\hat\mu$/incorrect $\hat\pi$; incorrect $\hat\mu$/correct $\hat\pi$; incorrect $\hat\mu$/incorrect $\hat\pi$. The top three rows demonstrate that when at least one nuisance model is correct, the \ACC{} estimator closely tracks the standard \texttt{DR} estimator, which in turn tracks the well-specified estimator between \texttt{OR} and \texttt{IPW}, confirming its consistency. The bottom row illustrates the critical case of complete misspecification. The rightmost plot provides a direct visualization of the adaptive correction clipping mechanism: the \ACC{} estimate follows the standard \texttt{DR} when the latter is stable but remains bounded when the standard \texttt{DR} produces extreme, unstable estimates. This empirically demonstrates how the safety property of the \ACC{} estimator protects against the double fragility of the standard \texttt{DR} estimator. Full details of the simulation scenario are provided in Section \ref{sec:sim}.}
    \label{fig:scatter100}
\end{figure}

\begin{table}[ht]
\centering
\caption{Simulation results for $n=100$, $n=200$, and $n=1000$. This comparison highlights the performance of reweighted and stabilized estimators across increasing sample sizes.}
\label{tab:sim_results_refined}
\begin{tabular}{ll ccc ccc ccc}
\toprule
& & \multicolumn{3}{c}{\textbf{Bias}} & \multicolumn{3}{c}{\textbf{RMSE}} & \multicolumn{3}{c}{\textbf{MAE}} \\
\cmidrule(lr){3-5} \cmidrule(lr){6-8} \cmidrule(lr){9-11}
\textbf{Scenario} & \textbf{Estimator} & \textbf{100} & \textbf{200} & \textbf{1000} & \textbf{100} & \textbf{200} & \textbf{1000} & \textbf{100} & \textbf{200} & \textbf{1000} \\
\midrule
\multirow{6}{*}{\shortstack[l]{Correct $\hat\mu$,\\Correct $\hat\pi$}} 
 & \texttt{Hajek}                & -0.16 & -0.31 & -0.02 & 6.20 & 3.86 & 1.69 & 3.68 & 2.46 & 1.10 \\
 & \texttt{Trimmed IPW}          & -0.46 & -0.43 & -0.14 & 5.33 & 3.67 & 1.58 & 3.63 & 2.42 & 1.07 \\
 & \texttt{WLS}                  & -0.08 & -0.07 & -0.00 & 3.46 & 2.57 & 1.13 & 2.31 & 1.85 & 0.77 \\
 & \ACC{} ($\delta_n=0$) & -0.07 & -0.07 & -0.00 & 3.46 & 2.57 & 1.13 & 2.30 & 1.85 & 0.77 \\
 & \texttt{\ACC{}}               & -0.07 & -0.07 & -0.00 & 3.46 & 2.57 & 1.13 & 2.30 & 1.85 & 0.77 \\
 & \texttt{TMLE}                 & -0.08 & -0.07 & -0.00 & 3.46 & 2.57 & 1.13 & 2.31 & 1.85 & 0.77 \\
\midrule
\multirow{6}{*}{\shortstack[l]{Correct $\hat\mu$,\\Incorrect $\hat\pi$}} 
 & \texttt{Hajek}                & 0.03  & 1.59  & 4.76  & 8.80 & 9.73 & 11.10 & 4.24 & 3.41 & 2.56 \\
 & \texttt{Trimmed IPW}          & -1.07 & -0.59 & 0.24  & 5.93 & 4.13 & 1.91  & 4.06 & 2.79 & 1.33 \\
 & \texttt{WLS}                  & -0.07 & -0.07 & -0.00 & 3.46 & 2.57 & 1.13  & 2.29 & 1.86 & 0.77 \\
 & \ACC{} ($\delta_n=0$) & 0.13  & -0.26 & -0.39 & 4.53 & 3.20 & 1.54  & 2.93 & 2.13 & 1.08 \\
 & \texttt{\ACC{}}               & -0.06 & -0.07 & 0.03  & 3.46 & 2.57 & 1.30  & 2.30 & 1.84 & 0.78 \\
 & \texttt{TMLE}                 & -0.11 & -0.08 & -0.00 & 3.46 & 2.57 & 1.13  & 2.28 & 1.85 & 0.77 \\
\midrule
\multirow{6}{*}{\shortstack[l]{Incorrect $\hat\mu$,\\Correct $\hat\pi$}} 
 & \texttt{Hajek}                & -0.16 & -0.31 & -0.02 & 6.20 & 3.86 & 1.69 & 3.68 & 2.46 & 1.10 \\
 & \texttt{Trimmed IPW}          & -0.46 & -0.43 & -0.14 & 5.33 & 3.67 & 1.58 & 3.63 & 2.42 & 1.07 \\
 & \texttt{WLS}                  & 0.62  & 0.30  & 0.13  & 4.23 & 3.04 & 1.35 & 2.69 & 2.05 & 0.92 \\
 & \ACC{} ($\delta_n=0$) & -0.05 & -0.05 & 0.06  & 3.47 & 2.57 & 1.30  & 2.30 & 1.85 & 0.77 \\
 & \texttt{\ACC{}}               & 0.45  & 0.08  & -0.15 & 4.51 & 3.24 & 1.52  & 2.98 & 2.19 & 1.05 \\
 & \texttt{TMLE}                 & 0.00  & -0.05 & -0.05 & 4.29 & 3.13 & 1.46  & 2.73 & 2.09 & 1.01 \\
\midrule
\multirow{6}{*}{\shortstack[l]{Incorrect $\hat\mu$,\\Incorrect $\hat\pi$}} 
 & \texttt{Hajek}                & 0.03  & 1.59  & 4.76  & 8.80 & 9.73 & 11.10 & 4.24 & 3.41 & 2.56 \\
 & \texttt{Trimmed IPW}          & -1.07 & -0.59 & 0.24  & 5.93 & 4.13 & 1.91  & 4.06 & 2.79 & 1.33 \\
 & \texttt{WLS}                  & -1.73 & -2.37 & -2.96 & 4.86 & 3.97 & 3.29  & 3.30 & 2.84 & 2.95 \\
 & \ACC{} ($\delta_n=0$) & -0.80 & -1.09 & -0.91 & 4.67 & 3.38 & 1.72  & 2.93 & 2.26 & 1.23 \\
 & \texttt{\ACC{}}               & -1.53 & -1.96 & -1.60 & 5.00 & 3.81 & 2.16  & 3.42 & 2.62 & 1.66 \\
 & \texttt{TMLE}                 & -3.60 & -4.00 & -4.74 & 6.41 & 5.63 & 5.30  & 4.08 & 4.02 & 4.38 \\
\bottomrule
\end{tabular}
\end{table}

\begin{figure}[ht]
    \centering
    \includegraphics[width=\linewidth]{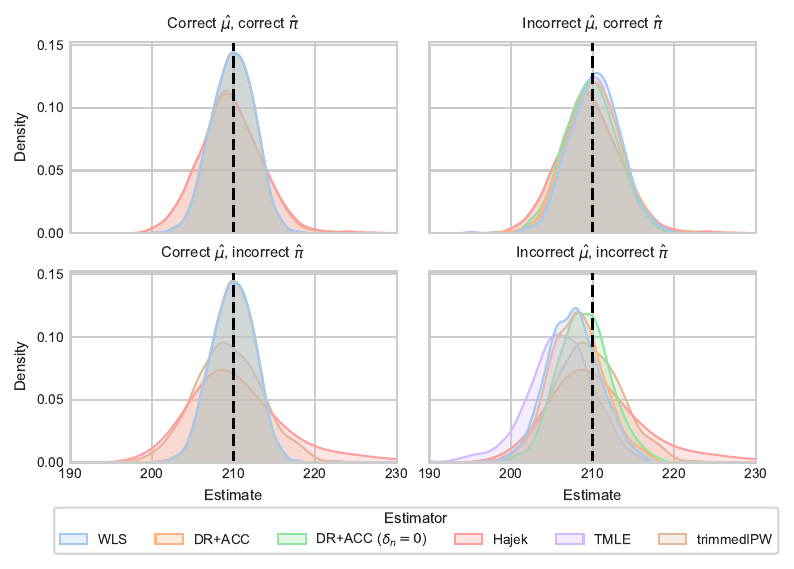}
    \caption{Sampling distributions for Hajek Inverse Probability Weighting (\texttt{Hajek}), trimmed Inverse Probability Weighting (\texttt{trimmed IPW}), weighted least squares (\texttt{WLS}), our proposed \ACC{}, our proposed \ACC{} with $\delta_n=0$, and targeted maximum likelihood (\texttt{TMLE}) \citep{van2006targeted, frank2024implementing} estimators from 1000 simulations with a sample size of $n=200$. The true parameter value is 210. The four panels show the estimators' performance under all combinations of correct and incorrect nuisance model specifications.  Full details of the simulation scenario are provided in Section \ref{sec:sim}.}
    \label{fig:intro_plot_n200_extra}
\end{figure}

\begin{figure}[ht]
    \centering
    \includegraphics[width=\linewidth]{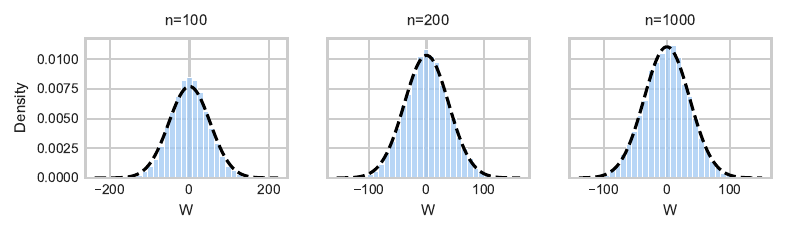}
    \caption{Comparison between empirical distribution of $W$ and Gaussian approximation of \texttt{DR} (black dashed line) for a random simulation draw, when both nuisance models are well-specified. Full details of the simulation scenario are provided in Section \ref{sec:sim}.}
    \label{fig:W_example}
\end{figure}

\clearpage
\break
\section{Additional application details}
\begin{table}[ht]
\centering
\caption{List of significant peptides and their related genes. Significance level 0.05. No multiple testing correction applied. We retrieve the UniProt Accession Number from the peptide string and use it to query the UniProt database API to get primary gene information.}
\label{tab:gene_list}
\begin{tabular}{lll}
\toprule
\textbf{Peptide} & \textbf{Gene name} & \textbf{Reference} \\
\midrule
\texttt{AGTPYTVTLHGEVR} & \textit{TNC} &  \citet{xie2013tenascin}     \\
\texttt{AVEEEDKMTPEQLAIK} & \textit{PSMD14} & \citet{liu2024alzheimer}        \\
\texttt{DAEVERDEER} & \textit{MYH14} & \citet{finsterer2019neuropathy}\\
\texttt{DAVTYTEHAKR} & \textit{H4C1} & \citet{silvestro2021moringin}  \\
\texttt{ETQEDKLEGGAAK} & \textit{EPB41L2} & \citet{huang2024integrative} \\
\texttt{GDEEEEGEEKLEEK} & \textit{CANX} & \citet{shen2025er}              \\
\texttt{IPTHLFTFIQFK} & \textit{RO60} & \citet{crooke2022reduced}      \\
\texttt{LKHEC[+57]GAAFTSK} & \textit{CUL4A} & \citet{yasukawa2020nrbp1}      \\
\texttt{MGVAAHKK} & \textit{S100A8} & \citet{litus2025binding}        \\
\texttt{SLSTSGESLYHVLGLDK} & \textit{DNAJC5} & \citet{rosene2023cell}          \\
\texttt{TAEHEAAQQDLQSK} & \textit{KTN1} & \citet{li2024identification}   \\
\texttt{TANKDHLVTAYNHLFETK} & \textit{FKBP3} & \citet{blair2015emerging}       \\
\texttt{TNNVSEHEDTDKYR} & \textit{COPB1} & \citet{yang2019genetic}         \\
\bottomrule
\end{tabular}
\end{table}

\begin{figure}[ht]
    \centering
    \includegraphics[width=\linewidth]{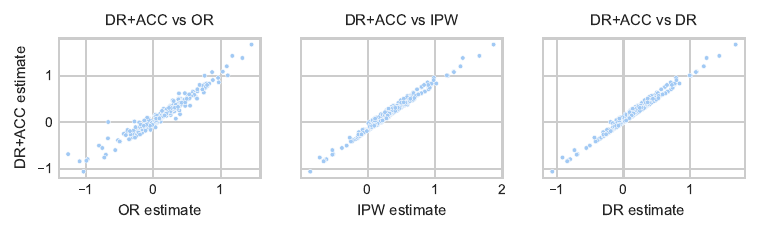}
    \caption{Scatterplots comparing the estimates of the \ACC{} estimator (y-axis) against the \texttt{OR}, \texttt{IPW}, and standard \texttt{DR} estimators (x-axes) for the ATE associated to each of the 270 peptides. Full details of the application study are provided in Section \ref{sec:app}.}
    \label{fig:app_scatter}
\end{figure}

\end{document}